\documentclass[a4paper, 11pt, fleqn]{article}

\usepackage[english]{babel}
\usepackage[utf8x]{inputenc}
\usepackage[T1]{fontenc}
\usepackage{authblk}
\usepackage{natbib}

\usepackage[a4paper,top=2.6cm,bottom=2.6cm,left=2.85cm,right=2.85cm]{geometry}

\usepackage{amsmath, amssymb, amsfonts, amsthm, mathtools, mathrsfs, dsfont}
\usepackage{graphicx}
\usepackage{multirow, multicol}
\usepackage{booktabs}%
\usepackage{xcolor}
\usepackage[algo2e, ruled]{algorithm2e}
\usepackage{enumitem}
\usepackage[colorlinks=true, allcolors=blue]{hyperref}
\usepackage[title]{appendix}%
\usepackage{textcomp}
\usepackage{manyfoot}
\mathtoolsset{showonlyrefs}
\graphicspath{{figures/}}


\newtheorem{lemma}{Lemma}
\newtheorem{example}{Example}
\newtheorem{remark}{Remark}

\raggedbottom

\newcommand*{\limite}[3][]{\overset{#1}{\underset{#2\rightarrow #3}{\longrightarrow}}}
\DeclareMathOperator*{\supp}{supp}
\renewcommand{\mid}{\;\ifnum\currentgrouptype=16 \middle\fi|\;}
\providecommand{\keywords}[1]{\small\textbf{{Keywords.}} #1}

\begin{document}

\title{Simulating signed mixtures}
\date{Authors contributed equally to this work.}

\author[1,3]{Christian P. Robert}
\author[1,2]{Julien Stoehr}

\affil[1]{CEREMADE, Université Paris-Dauphine, Université PSL, CNRS, 75016 Paris, France}
\affil[2]{UMR MIA Paris-Saclay, INRAE, AgroParisTech, Université Paris-Saclay, 91120 Palaiseau, France}
\affil[3]{Department of Statistics, University of Warwick, Coventry, CV4 7AL, UK}

\maketitle

\begin{abstract}
Simulating mixtures of distributions with both positive and negative (signed) weights proves a challenge as standard simulation algorithms prove inefficient in handling the negative weights. In particular, the natural representation of mixture random variates as being associated with latent component indicators is no longer available. We propose an exact accept-reject algorithm for the general case of finite signed mixtures that relies on optimally pairing positive and negative components and designing a stratified sampling scheme on these pairs. We analyze the performances of our approach, relative to the inverse cdf approach, since the cdf of the targeted distribution remains available for signed mixtures of common distributions.

\vspace{\baselineskip}
\noindent\keywords{Acceptance-rejection algorithm, Mixtures, Signed mixtures, Simulation, Inverse cdf, Quantile function}
\end{abstract}

\section{Introduction}

\subsection{Signed mixtures}
Mixture distributions \citep{titterington:smith:makov:1985} abound in the statistical literature as a ubiquitous tool to represent inhomogeneous populations and to enlarge the collection of common distributions \citep[see, e.g., ][]{maclachlan01}. The density function of such distributions
writes as a linear combination of $K>1$
base density functions $f_k$, $1 \leq k \leq K$, namely
\begin{equation}\label{eq:orig_mix}
\sum_{k=1}^K \omega_k f_k\,,\quad\text{where}\quad
\begin{cases}
\sum_{k=1}^K \omega_k=1\,, \\
\omega_1,\ldots,\omega_K>0\,.
\end{cases}
\end{equation}
It is then straightforward to simulate from \eqref{eq:orig_mix} when the component density functions $f_k$ are themselves easily simulated: a component index $1\le k\le K$ is selected with probability $\omega_k$ and a realization from $f_k$ is then produced. This simplicity explains why many simulation methods in the literature exploit an intermediary mixture construction
to speed up the production of pseudo-random samples from more challenging distributions. For instance, \citet[Section XIV.4.5]{devroye85} points out that unimodal distributions can be written as countable mixtures of uniform distributions. Similarly, mixture distributions are often selected as proposals in MCMC algorithms \citep{robert:casella:2004,cappe:douc:guillin:marin:robert:2008}.

In this paper, we consider the more challenging setting of {\em signed} mixtures, namely the case when the mixture weights, $\omega_1,\ldots,\omega_K$, in \eqref{eq:orig_mix} are {\em signed}, that is, when some of the $\omega_k$'s are {\em negative}. 
We define a
{\em signed mixture density} as a linear combination of $P\ge 1$ positively weighted density functions $f_k$ and of $N\ge 1$ negative weighted density functions $g_k$, with the constraint that the combination remains a properly defined probability density. Denoting by $\supp(h)$ the support of a real-valued function $h$, this implies that the joint support of the positively weighted density functions must contain the joint support of the negatively weighted density functions, namely
\begin{equation*}
\bigcup_{1\le k\le N}\supp(g_k) \subseteq \bigcup_{1\le k\le P} \supp(f_k)\,.
\end{equation*}
A {\em signed mixture density} $m$ thus writes as
\begin{equation}
m = \sum_{k=1}^P \omega_k^+ f_k - \sum_{k=1}^N \omega_k^- g_k\,,\quad\text{with}\quad
\begin{cases}
\sum_{k=1}^{P} \omega_k^+ - \sum_{k=1}^{N} \omega_k^-=1\,, \\
\omega_1^+,\ldots,\omega_{P}^+, \omega_1^-,\ldots,\omega_{N}^- >0
\end{cases}
\label{eqn:signed-mixt}
\end{equation}
and the constraint that $m$ is a non-negative function. (This property is indeed sufficient to ensure $m$ is a probability density.) 

\subsection{Simulation from signed mixtures}
Density functions expressed as \eqref{eqn:signed-mixt} present a significant challenge when considering the generic issue of simulating them and we are not aware of existing solutions to this problem. Indeed, a na{\"\i}ve solution consists in first simulating realizations from the associated mixture of positive weight components, which writes as
\begin{equation}
\label{eqn:m+}
\sum_{k=1}^{P} \omega_k^+ f_k \Big/ {\sum_{k=1}^{P} \omega_k^+}\,,
\end{equation}
when renormalized, and then using an accept-reject post-processing step \citep{bignami:dematteis:1971,devroye85,robert:casella:2004} that subsamples values among these simulations. While formally correct, this approach may prove highly inefficient since the marginal probability of acceptance 
\begin{equation*}
{\displaystyle 1}\Big/{\displaystyle \sum_{k=1}^{P} \omega_k^+}
\end{equation*}
can be arbitrarily close to zero. Furthermore, as already observed by \citet{devroye85}, checking for the acceptance condition is potentially costly if $K = P + N$ is large. The intuition behind the computational inefficiency of the standard accept-reject algorithm is that simulating from the positive weight components $f_k$ is not necessarily producing values within regions of high probability for the actual distribution \eqref{eqn:signed-mixt}. Indeed, it is possible that the negative weight components $\omega_k^-g_k$ remove most of the mass attached to the $\omega_k^+f_k$'s. Therefore high probability regions for $m$ have no reasons in general to co\"incide with high probability regions for all of the $f_k$'s. By the same argument, resorting to an accept-reject method based on the mixture of the {\em negative} weight components is similarly inefficient. In addition, the so-called {\em series method} proposed by \citet[Section IV.5]{devroye85} is not well-suited for this target density since it requires a manageable functional upper bound on \eqref{eqn:signed-mixt}, namely one that can be simulated.  Efficient alternatives are thus necessary and we herewith propose a generic solution.

\subsection{Prevalence of signed mixtures}
The motivation for considering {\em signed} mixtures and their simulation is many-fold. Besides approximations proposed for simulation reasons \citep{devroye85}, signed mixtures appear in series representations of density functions \citep{beaulieu:1990,delaigle:hall:2010,hubalek:kuznetsov:2011} or as flexible modelling tools \citep[][the later using the denomination of {\em subtractive mixtures}]{zhang:zhang:2005,mueller:2012,kroese:etal:2019,loconte:etal:2024}. The kernel conditional density estimators constructed by \cite{schuster:etal:2020} open the possibility of signed mixtures, while \cite{polson2024negativeprobability} connect signed mixtures with the notion of negative probability appearing in quantum theory. \cite{loconte:etal:2024} provides further references about the use of signed mixtures in machine learning, optimization, and signal processing.

Specific examples of probability density functions represented as signed series include the Raab-Green distribution, the Kolmogorov-Smirnov (test) distribution, and the Erd\"os-Kac distribution \citep[IV.5]{devroye85}. For instance, \cite{elston:glasbey:1989} study the special case of Exponential signed mixtures 
\begin{equation*}
\sum_{k=1}^P \omega_k^+ \mathcal E(\lambda_k^+) - \sum_{k=1}^N \omega_k^- \mathcal E(\lambda_k^-)\,,\quad\text{such that}\quad\lambda_1^+,\ldots,\lambda_N^->0\,,
\end{equation*}
by exploiting a connection with {\em generalised Erlang distributions}, whose density functions are themselves linear combinations of multiple Exponential density functions with some 
negative coefficients. However, the complexity of their approach is of order $\mathcal{O}(2^{P+N})$, which calls for a more efficient alternative. Similarly, the bivariate Exponential distribution proposed by \cite{gumbel:1960} is a signed mixture of bivariate 
Gamma distributions, whose efficient simulation is of direct interest in extreme value theory and copula representations. 

A primary remark is that, when the sole purpose of the simulation is the approximation of integrals related with \eqref{eqn:signed-mixt}, the negativity of some weights is not directly a hindrance since
\begin{equation*}
\int h(x) m(x)\mathrm{d}x =
\int h(x) \sum_{k=1}^{P} \omega_k^+ f_k(x) \mathrm{d}x-\int h(x) \sum_{k=1}^{N} \omega_k^- g_k(x) \mathrm{d}x
\end{equation*}
holds. It thus suffices to produce simulations from both positive weight and negative weight mixtures. However, this solution may prove inefficient when $K$ is large and when the supports of the positive and negative density functions strongly overlap. (The above decomposition also explains why the cdf of \eqref{eqn:signed-mixt} can be computed, when considering $h(y)=\mathbb{I}_{(-\infty,x)}(y)$.)

\subsection{A core decomposition for signed mixtures}This paper approaches the simulation from a signed mixture \eqref{eqn:signed-mixt} by rewriting $m(\cdot)$ 
using a non-unique decomposition of the positive and negative weights and a rearrangement into three terms
\footnote{In order to lighten the notational burden, \eqref{eqn:mixtby} reuses notations such as $f_k$ and $g_k$, with no exact correspondence with those found in \eqref{eqn:signed-mixt}. For instance, the number $K$ of components in \eqref{eqn:mixtby} may be equal to the product $P\times N$. 
}, 
\begin{equation}
m=\sum_{k=1}^K \lambda_k \{a_k f_k - g_k\}
+\sum_{i = 1}^P r_i f_i - \sum_{j = 1}^N s_j g_j,
\label{eqn:mixtby}
\quad\text{such that}\quad
\begin{cases}
\lambda_1, \ldots, \lambda_K > 0\,,\\
r_1, \ldots, r_P\geq 0\,,\\ 
s_1, \ldots, s_N \geq 0\,,
\end{cases}
\end{equation}
and for all $1 \leq k \leq K$,
$\inf_{x\in\mathbb R} \{a_kf_k(x) -g_k(x)\} \geq 0 \,$.

\begin{example}
Denoting  $\varphi(\cdot;\mu,\sigma^2)$ the probability density function of the Normal distribution with mean $\mu$ and variance $\sigma^2$,  consider the Normal signed mixture
\begin{equation*}
    m(x) = 2 \varphi(x;0,1) + 1.8 \varphi(x;0.5,1) 
    - \varphi(x;0.25, 0.25) - 0.8 \varphi(x;0.75, 0.16).
\end{equation*}
A valid decomposition of this signed mixture as \eqref{eqn:mixtby} is, for instance,
\begin{align*}
m(x) & = 
0.8\{2.25\varphi(x;0,1) - \varphi(x;0.75, 0.16)\}
+ 0.9\{2\varphi(x;0.5,1) - \varphi(x;0.25, 0.25)\}\\
& \hspace{2em}+ 0.2 \varphi(x;0,1) - 0.1\varphi(x;0.25,0.25)\,.
\end{align*}
since it can be easily checked that the first two signed mixtures are positive functions.
\end{example}

The construction and optimization of \eqref{eqn:mixtby} will be conducted in Section \ref{sec:pairs}. The argument behind this representation \eqref{eqn:mixtby} is that a generic signed mixture \eqref{eqn:mixtby} can always be written\footnote{In some special cases from the literature, the signed mixtures already come paired as in \eqref{eqn:mixtby}.} as a mixture of $K$ two-component signed mixtures, the $\{a_k f_k - g_k\}$'s, plus potential positive and negative residual terms. Simulation-wise, the appeal attached to \eqref{eqn:mixtby} is that those residuals have low probability mass and hence most of the draws from \eqref{eqn:mixtby} correspond to the first sum in \eqref{eqn:mixtby}, whose simulation is straightforward. Indeed, this simulation proceeds by first selecting at random a component index $k$ with probability proportional to $\lambda_k$ and second generating from this component density $\{a_kf_k-g_k\}/(a_k-1)$ by a na\"ive accept-reject approach when $a_k$ is small enough, or by a more elaborate accept-reject method that is developed below, otherwise.

The plan of the paper is as follows. In Section \ref{sec:2cpt}, we construct a specific simulation method for two-component signed mixtures. Section \ref{sec:pairs} details how the pairing decomposition of \eqref{eqn:mixtby} is chosen. Section \ref{sec:compar} contains numerical experiments that compare different approaches of this simulation challenge.
Technical details are postponed till Appendices \ref{app:two-comp}, \ref{app:pairing}, \ref{app:inv-cdf}, \ref{app:rand-mix}. Appendix \ref{app:two-comp} specifically focuses on two-component signed mixtures and elaborates in details examples of the signed mixtures of two Normal or two Gamma distributions.

\subsection{Notations and conventions}

Throughout the paper, we do not distinguish between the measures and their associated density functions. In what follows, the probability density function (pdf) of the signed mixture (with respect to the Lebesgue measure) is denoted by $m$. The positive and negative weight components are consistently referred to as $f_i$, $1 \leq i \leq P$ and $g_j$, $1\leq j\leq N$, respectively, with indices omitted when there is no ambiguity. The positive part $m^+$ of a signed mixture corresponds to the mixture \eqref{eqn:m+}.

For a set $D \subseteq \mathbb{R}^d$, we denote by $\nu(D)$ the probability that a random variable with density $\nu$ belongs to $D$ and by $\vert D \vert$ the volume of $D$, that is
\begin{equation*}
\vert D \vert = \int \mathds{1}_{D}(x)\mathrm{d}x.
\end{equation*}
We always assume that the cumulative distribution functions (cdf) of positive and negative weight components can be computed everywhere so that
\begin{equation*}
m(D) = \sum_{k = 1}^P \omega_k^+f_k(D) - \sum_{k = 1}^N \omega_k^- g_k(D)
\end{equation*}
is available.

\section{Two-component signed mixtures}
\label{sec:2cpt}

In this section, we address the specific case of a signed mixture with a sole positive and a sole negative weights. Given two distinct probability density functions $f$ and $g$ such that $\supp(g)\subseteq\supp(f)$, and
\begin{equation}
a^\star = \sup_{x\in\supp(f)} {g(x)}\big/{f(x)} < +\infty\,,
\label{eqn:cond-pair}
\end{equation}
a {\em two-component signed mixture} of $f$ and $g$ is defined as
\begin{equation}
m = \frac{af - g}{a - 1}\,,\quad\text{with}\quad a\geq a^\star. 
\label{eqn:def-pair}
\end{equation}
Condition \eqref{eqn:cond-pair} ensures that $m$ stands as a proper probability density when $g$ 
has tails that are dominated by those of the positive component $f$. Note that $a^\star > 1$ as in generic accept-reject settings (see Appendix \ref{app:two-comp}, Lemma \ref{lemma:a-star}). The limiting case $a = a^\star$ corresponds to the minimal positive weight required to compensate the negative weight component, i.e., when the density function $m$ reaches zero at some point of its support or asymptotically.

\paragraph*{Vanilla sampling scheme} As mentioned earlier, a natural, albeit na{\"\i}ve, method for sampling from  \eqref{eqn:def-pair} consists in an accept-reject algorithm with proposed values generated from the distribution $f$. Since
\begin{equation*}
\sup_{x\in\supp(f)} {m(x)}\big/{f(x)} = \frac{a}{a-1}\,,
\end{equation*}
the proposed values $x$ are accepted with probability\begin{equation*}
\frac{af(x)-g(x)}{af(x)}.
\end{equation*}
The average acceptance probability is equal to $(a-1)/a$, which makes the approach inefficient when $a-1$ is near zero, i.e., when $f$ and $g$ are quite similar. 


\paragraph*{Stratified sampling scheme}
We can instead construct an alternative accept-reject scheme based on a piecewise upper bound on \eqref{eqn:def-pair} towards yielding a higher acceptance probability on average. For this purpose, consider a partition
$(D_0, \ldots, D_n)$ of $\supp(f)$ with the convention that $D_0$ contains the tails of $f$ and potential subsets where the density $f$ is unbounded. We assume that upper and lower bounds on both $f$ and $g$, over all remaining elements of the partition, $D_i$, $1 \leq i \leq n$, exist and can be easily computed, so that the terms
\begin{equation*}
    h_i = a\sup_{x\in D_i} f(x) - \inf_{x\in D_i} g(x),\quad 1 \leq i \leq n
\end{equation*}
are available. These terms yield a rough upper bound on $m$ on each $D_i$ that can obviously be improved in the specific situation when direct access to the supremum of $m$ on $D_i$ is available. Tails are treated separately. Indeed, since the tail dominating component is necessarily attached to the positive part, $af$ can then be used as an upper bound of $m$ on $D_0$. The partition is therefore providing a direct and different upper bound on \eqref{eqn:def-pair}, that is, for all $x\in\supp(f)$,
\begin{equation*}
m(x) \leq \frac{1}{a-1}\left\lbrace
af(x)\mathds{1}_{D_0}(x) + \sum_{i = 1}^{n} h_i \mathds{1}_{D_i}(x)
\right\rbrace. 
\end{equation*} 
This dominating function can be normalised into a proposal density, towards a novel accept-reject algorithm since sampling from this proposal is straightforward. 
Since this proposal is a special instance of a mixture distribution, a possible sampling strategy is as follows: one picks a partition element, that is a (latent) component index, at random according to the vector of probabilities of its components 
\begin{equation*}
    \varrho = (a f(D_0),h_1\vert D_1\vert,\ldots,h_n\vert D_n\vert)\left/ \left\lbrace af(D_0)+\sum_{i=1}^n h_i\vert D_i\vert \right\rbrace\right.
\end{equation*}
and then simulates from $f$ restricted to $D_0$ 
or uniformly on $D_i$, $1 \leq i \leq n$, respectively. Note that $D_0$ can be further decomposed towards making the simulation of the truncated distribution manageable in practice.

This strategy is however computationally sub-optimal since, in order to obtain one draw from $m$, it requires to sample on average (see Appendix \ref{sec:proof-algo-1})
\begin{equation*}
M = \frac{a}{a-1} f(D_0) + \frac{1}{a-1} \sum_{i = 1}^{n} h_i \vert D_i\vert
\end{equation*}
(latent) component index variables, while only one is needed. 
Hence, we switch to a more efficient strategy. We follow a stratified sampling method (as detailed in Algorithm \ref{algo:ar2}) that takes advantage of the partition structure as well as of the availability of the cdf of \eqref{eqn:def-pair}. First, the procedure selects a partition element $D_k$ according to the signed mixture $m$, i.e., it draws the partition index $k$ according to the Multinomial distribution
$\mathcal{M}(m(D_0), \ldots, m(D_n))$. 
Once the partition element $k$ is generated, the procedure samples a realization from the distribution $m$ restricted to $D_k$. This is achieved by performing an accept-reject step that uses as proposal a truncated distribution if $D_0$ was picked, and a uniform distribution otherwise.

The stratified approach reduces the total number of simulated random variables, even though the average acceptance probability of Algorithm \ref{algo:ar2} remains the same as for the na\"\i ve accept-reject algorithm, that is $1/M$. 
While this modification might seem accessory, as the simulation method within a partition element remains unchanged, it offers the key advantage, of
cutting the average computational budget of the proposition step from $3M$ to $1 + 2M$ random variables, in contrast to the initial strategy. 

\begin{algorithm2e}[t]
\label{algo:ar2}
\caption{Accept-reject method for two-component signed mixtures}
\KwIn{Partition $D_0, \ldots, D_n$, upper bounds $h_1, \ldots, h_n$.}
\vspace{\baselineskip}
{\bf sample} $k$ from $\mathcal{M}(m(D_0),\ldots,m(D_n))$;

\Repeat{accepting}{
\If{k = 0}{
{\bf sample} $x$ from $f$ truncated to $D_0$\;
{\bf accept} $x$ with probability $\{af(x) - g(x)\}/\{af(x)\}$\;
} \Else{
{\bf sample} $x$ uniformly on $D_k$\;
{\bf accept} $x$ with probability $\{af(x) - g(x)\}/h_k$\;
}
}
\end{algorithm2e}

Obviously, the initial partition can easily be refined into smaller sets towards controlling the overall acceptance probability, as stated by the following result (Appendix \ref{sec:proof:lem:accept-proba} details its proof).
\begin{lemma}
\label{lem:accept-proba}
Let $\delta \in (1-1/a, 1)$ 
and $\varepsilon \in [0, (1 - \delta)/\delta)$. If
\begin{equation}
\label{hyp:d0}
g(D_0) = \frac{(a - 1)\{1 - \delta(\varepsilon + 1)\}}{\delta},
\end{equation}
then there exists a partition $(D_1, \ldots, D_{n_{\varepsilon}})$ of $\supp(f)\setminus D_0$ such that the average acceptance probability of Algorithm \ref{algo:ar2} is greater than $\delta$.
\end{lemma}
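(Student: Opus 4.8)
The plan is to establish the equivalent inequality $M < 1/\delta$, since the average acceptance probability of Algorithm \ref{algo:ar2} is $1/M$ and we want it to exceed $\delta$. The whole argument rests on recognising the quantity $\sum_{i=1}^n h_i|D_i|$ as a pair of Darboux sums. Writing out $h_i = a\sup_{x\in D_i}f(x)-\inf_{x\in D_i}g(x)$ gives \[ \sum_{i=1}^n h_i|D_i| = a\sum_{i=1}^n|D_i|\sup_{x\in D_i}f(x) - \sum_{i=1}^n|D_i|\inf_{x\in D_i}g(x), \] in which the first term is an upper Darboux sum of $f$ and the second a lower Darboux sum of $g$ over $\supp(f)\setminus D_0$. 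Since $D_0$ is chosen to contain the tails and every region where $f$ is unbounded, both $f$ and $g$ are bounded and Riemann integrable on the remaining bounded set, so under refinement the upper sum decreases to $\int_{\supp(f)\setminus D_0}f = 1-f(D_0)$ and the lower sum increases to $\int_{\supp(f)\setminus D_0}g = 1-g(D_0)$, the latter using $\supp(g)\subseteq\supp(f)$.

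Substituting these limits into $M$, the $f(D_0)$ contributions cancel and the expression collapses to \[ M \longrightarrow 1 + \frac{g(D_0)}{a-1}. \] Inserting the prescribed value $g(D_0) = (a-1)\{1-\delta(\varepsilon+1)\}/\delta$ turns this limit into $1/\delta-\varepsilon$. For $\varepsilon>0$ this limit lies strictly below $1/\delta$, and because the Darboux sums approach their integrals monotonically from above, there is a finite partition $(D_1,\dots,D_{n_\varepsilon})$ of $\supp(f)\setminus D_0$, with $n_\varepsilon$ blocks, at which $M$ has already dropped below $1/\delta$; this is exactly the assertion $1/M>\delta$.

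The single nontrivial point — and the step I expect to be the main obstacle — is the quantitative control of the Darboux error: one must show the gap between $a\sum_i|D_i|\sup_{x\in D_i}f(x)-\sum_i|D_i|\inf_{x\in D_i}g(x)$ and its limit $a(1-f(D_0))-(1-g(D_0))$ can be forced below $(a-1)\varepsilon$, which is precisely where Riemann integrability on the bounded set $\supp(f)\setminus D_0$ and the strict slack $\varepsilon>0$ are used. The boundary choice $\varepsilon=0$ is genuinely delicate, since then the target $1/\delta$ is the monotone limit itself, approached strictly from above, so no finite partition beats it. Before all this I would also check that the prescribed $g(D_0)$ is a valid mass in $(0,1)$: positivity follows from $\varepsilon<(1-\delta)/\delta$ and $g(D_0)<1$ from $\delta>1-1/a$, which is why those two ranges appear in the hypotheses.
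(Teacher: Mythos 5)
Your proof is correct and follows essentially the same route as the paper's: both identify $M$ as converging under refinement to $1 + g(D_0)/(a-1)$ via Riemann-sum convergence of $\frac{1}{a-1}\sum_i h_i\vert D_i\vert$ on the bounded set $\supp(f)\setminus D_0$, substitute the prescribed $g(D_0)$ to obtain the limit $1/\delta - \varepsilon$, and conclude that a sufficiently fine finite partition brings $M$ below $1/\delta$ (your splitting into an upper Darboux sum for $af$ and a lower one for $g$ is only a cosmetic repackaging of the paper's single upper Riemann sum for $m$). Your caveat about $\varepsilon = 0$ is also consistent with the paper, whose footnote concedes that this boundary case corresponds to a countably infinite partition and serves no practical purpose.
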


The constraint on parameters $\delta$ and $\varepsilon$ ensures that (i) $g(D_0)\in(0, 1)$ and (ii) Algorithm \ref{algo:ar2} outperforms the vanilla method in terms of average acceptance probability. The result also
provides a heuristic on how to build the partition to achieve this.
If we aim at an overall acceptance probability of $\delta$, we first build $D_0$ so it satisfies \eqref{hyp:d0} for a user-specified tolerance $\varepsilon$. 
The purpose of this threshold is twofold: it informs on the average acceptance probability for a countably infinite partition, namely $\delta/(1-\delta\varepsilon)$ (see Appendix \ref{sec:proof:lem:accept-proba}), and, more interestingly, on the largest error possible when approximating $1 - m(D_0)$ by the upper Riemann sum
\begin{equation*}
    \frac{1}{a-1} \sum_{i = 1}^{n_{\varepsilon}} h_i \vert D_i\vert.
\end{equation*}
Note that this error can be larger than $1$ when the targeted acceptance probability is lower than $0.5$. With a positive tolerance level\footnote{Note that setting $\epsilon=0$ serves no practical purpose, as it means having a countably infinite partition.},
the stratified approach leaves room for improving performances. The mass of $D_0$ with respect to $g$ decreases with $\varepsilon$. Choosing $\varepsilon$ close to $(1 - \delta)/\delta$ allows for larger errors but this requires to partition a larger domain. 
Conversely, choosing $\varepsilon$ close to $0$ involves partitioning a smaller domain but requires a possibly larger cardinal of the partition. 
Once $D_0$ is set, we recursively divide $\supp(f)\setminus D_0$ to find a suitable $(D_1, \ldots, D_{n_{\varepsilon}})$, that is till the upper Riemann sum approximates $1 - m(D_0)$ with an error less than $\varepsilon$:
\begin{equation*}
\frac{1}{a-1} \sum_{i = 1}^{n_{\varepsilon}} h_i \vert D_i\vert - 1 + m(D_0) \leq \varepsilon
\quad\text{i.e.}\quad
\frac{1}{a-1} \sum_{i = 1}^{n_{\varepsilon}} h_i \vert D_i\vert - \frac{1}{\delta} + \frac{af(D_0)}{(a - 1)} < 0.
\end{equation*}
Various recursive processes can be used to achieve this stopping rule. In Section \ref{sec:compar}, we started with a partition based on equally spaced points, and we then recursively refined every partition element $D_i$ for which
\begin{equation*}
    \frac{1}{a-1}h_i\vert D_i \vert - m(D_i) > \frac{\varepsilon}{n_\varepsilon}.
\end{equation*}
For a practical implementation of building such a partition, we refer the reader to the example in Appendix \ref{app:hist}.

\section{Pairing mechanism}
\label{sec:pairs}

For a generic signed mixture \eqref{eqn:signed-mixt}, it is rarely the case that the density $m$ naturally appears in the format \eqref{eqn:mixtby}. We thus propose a method to construct a pairing of positive and negative (weight) components and a residual mixture towards a representation of the mixture as \eqref{eqn:mixtby} that improves the average acceptance probability.

For a given signed mixture \eqref{eqn:signed-mixt}, denote $E$ the set of all acceptable pairs of positive and negative weight component indices, i.e., such that we can define a two-component signed mixture from the associated density functions, namely
\begin{equation*}
E = 
\left\lbrace
(i, j) , 
\begin{aligned}
\,1 \leq i \leq P
\\ 
1\leq j \leq N 
\end{aligned}
\mid
\supp(g_j) \subseteq \supp(f_i)
\quad\text{and}\quad
a_{ij}^\star = \sup_{x\in\supp(f_i)} \frac{g_j(x)}{f_i(x)} < +\infty
\right\rbrace.
\end{equation*}
The set $E$ is always non-empty since, otherwise, the signed mixture $m$ could not constitute a proper probability density. Subsequently, $E_i^+$ and $E_j^-$ will denote the sets of pairs that contain the positive component $i$ and the negative component $j$, respectively.

A {\em pairing} refers to a set of two-component signed mixtures that can be constructed from mixture $m$, and
is defined as a subset $F\subset E$, and a collection of weights $(\omega_{ij}^+, \omega_{ij}^-)_{(i,j)\in F}$ that satisfy the following constraints
\begin{align}
\forall (i,j) \in F, & \quad\omega_{ij}^+ - a_{ij}^\star\omega_{ij}^- \geq 0, \label{eqn:valid-pair}
\\
\forall i, \,\,1\leq i \leq P,  &\quad\sum_{(i, j)\in E_i^+\cap F} \omega_{ij} ^+ \leq \omega_i^+, \label{eqn:const-wp}
\\
\forall j, \,\,1\leq j \leq N,  &\quad\sum_{(i, j)\in E_j^-\cap F} \omega_{ij} ^- \leq \omega_j^-.  \label{eqn:const-wn}
\end{align}
The constraint \eqref{eqn:valid-pair} ensures that the weights associated with the pair $(i, j)$ define a two-component signed mixture that is positive everywhere. 
Constraints \eqref{eqn:const-wp} and \eqref{eqn:const-wn} guarantee 
that when we gather the two-component signed mixtures, the overall weight does not exceed the total weight in $m$.

A pairing is associated with a residual mixture
\begin{equation*}
\sum_{i = 1}^P r_i f_i - \sum_{j=1}^N s_j g_j,
\quad\text{where}\quad
r_i = \omega_i^+ - \sum_{(i, j)\in E_i^+\cap F} \omega_{ij} ^+
\quad\text{and}\quad
s_j = \omega_j^- - \sum_{(i, j)\in E_j^-\cap F} \omega_{ij} ^-.
\end{equation*}
The decomposition of $m$ associated with the pairing thus writes as
\begin{equation}
\label{eqn:negmix-decomp}
\sum_{(i,j) \in F} (\omega_{ij}^+ f_i - \omega_{ij}^- g_j) + \sum_{i = 1}^P r_i f_i - \sum_{j=1}^N s_j g_j.
\end{equation}
Sampling from $m$ can hereby be achieved by proposing a sample from the mixture made of the two-component signed mixtures and of the positive weight components, namely
\begin{equation*}
\pi = \sum_{(i,j) \in F} \frac{\omega_{ij}^+ - \omega_{ij}^-}{C}
\left(
\frac{\omega_{ij}^+ f_i - \omega_{ij}^- g_j}{\omega_{ij}^+ - \omega_{ij}^-}
\right) 
+ \sum_{i = 1}^P \frac{r_i}{C} f_i,
\quad\text{where}\quad
C = \sum_{i = 1}^P \omega_i^+ - \sum_{(i,j)\in F}\omega_{ij}^-,
\end{equation*}
and by accepting the resulting simulation $x$ with probability
${m(x)}\big/{C\pi(x)}$.
Sampling from $\pi$ proceeds as for any standard (unsigned) mixture distribution, albeit requiring an extra accept-reject step when sampling from the component of $\pi$ that corresponds to pairs $(i,j)\in F$ (see Algorithm \ref{algo:ar-gen}). 
\begin{algorithm2e}[t]
\label{algo:ar-gen}
\caption{Accept-reject method for general signed mixtures}
\KwIn{A pairing $F$.}
\vspace{\baselineskip}
{\bf compute} the vector of probabilities
$\text{prob} \propto \left(\omega_{ij}^+ - \omega_{ij}^-, r_\ell
\right)_{(i,j) \in F, 1\leq \ell \leq P}$\;
\Repeat{accepting}{
{\bf sample} $k$ according to prob\;
\If{k is associated with a pair $(i, j) \in F$}{
{\bf sample} $x$ from the two-component signed mixture with an accept-reject scheme\;
} \Else{
{\bf sample} $x$ from $f_k$\;
}
{\bf accept} $x$ with probability $m(x)/\{C\pi(x)\}$.
}
\end{algorithm2e}

If sampling each pair $(i,j)\in F$ relies on the vanilla approach, the overall procedure resumes to sampling by proposing from the mixture $m^+$ \eqref{eqn:m+}. Indeed, one sample from $m$ requires $C$ samples from $\pi$ on average, and to get one sample from $\pi$ we need to propose
\begin{equation*}
\sum_{(i,j) \in F} \frac{\omega_{ij}^+ - \omega_{ij}^-}{C}  \frac{\omega_{ij}^+}{\omega_{ij}^+ - \omega_{ij}^-} + \sum_{i = 1}^P \frac{r_i}{C} = \frac{1}{C} \sum_{i = 1}^P \omega_i^+
\end{equation*}
random variables on average. 
However, improving the acceptance probability to sample from at least one of the two-component signed mixtures involved in the decomposition \eqref{eqn:negmix-decomp} is enough to improve the performance of the sampling method, as shown by the following result (whose proof is given in Appendix \ref{sec:proof:lem:opt-pair}).\\

\begin{lemma}
\label{lem:opt-pair}
Consider $\delta\in(0,1)$ and a pairing $F$ for $m$. Assume that we sample from each pair $(i, j) \in F$, 
using
\begin{enumerate}[topsep = 5pt, itemsep = 5pt]
\item the vanilla sampling scheme if $(1 - \delta)\omega_{ij}^+ - \omega_{ij}^- \geq 0$, that is if the average acceptance probability of the vanilla scheme associated to the pair is larger than $\delta$,
\item a piecewise sampling scheme that guarantees an average acceptance probability greater than $\delta$, otherwise.
\end{enumerate}
Then Algorithm \ref{algo:ar-gen} requires on average less than
\begin{equation*}
\sum_{i = 1}^P \omega_i^{+}  + \frac{1}{\delta} \sum_{(i,j)\in F} \left\lbrace (1 - \delta)\omega_{ij}^+ - \omega_{ij}^- \right\rbrace \mathds{1}_{\{(1 - \delta)\omega_{ij}^+ - \omega_{ij}^- < 0\}}
\end{equation*}
proposed random variables to sample once from $m$.
\end{lemma}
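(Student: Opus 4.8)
The plan is to decompose the expected number of proposed random variables across the two nested accept--reject loops of Algorithm \ref{algo:ar-gen} and then simplify via the residual weights. First I would record the overall acceptance rate: since the decomposition \eqref{eqn:negmix-decomp} yields $C\pi - m = \sum_{j} s_j g_j \geq 0$, the function $\pi$ is a genuine envelope for $m$; moreover $\pi$ is a probability density because $\sum_{(i,j)\in F}(\omega_{ij}^+ - \omega_{ij}^-) + \sum_\ell r_\ell = C$, so the mean acceptance probability is $\tfrac1C\int m = 1/C$. Hence producing one accepted draw from $m$ costs, on average, $C$ candidates from $\pi$.

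Next I would evaluate the expected number of base draws (the ``proposed random variables'') used to generate a single $x\sim\pi$, by conditioning on the component of $\pi$ selected in the first step of the loop. A residual component $\ell$, chosen with probability $r_\ell/C$, costs one draw from $f_\ell$. A pair $(i,j)\in F$, chosen with probability $(\omega_{ij}^+-\omega_{ij}^-)/C$, is sampled by an inner accept--reject whose expected number of proposals equals the reciprocal of its inner acceptance rate. For a pair handled by the vanilla scheme (case 1) that rate is $(\omega_{ij}^+-\omega_{ij}^-)/\omega_{ij}^+$, so the expected cost is $\omega_{ij}^+/(\omega_{ij}^+-\omega_{ij}^-)$; for a pair handled by the piecewise scheme (case 2), Lemma \ref{lem:accept-proba} guarantees an inner acceptance rate strictly above $\delta$, so the expected cost is an envelope constant $M_{ij}<1/\delta$. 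Multiplying the per-candidate expectation by $C$ — legitimate by a Wald-type identity, since the number of outer rounds until acceptance is a stopping time and the event $\{G\geq t\}$ depends only on earlier rounds, hence is independent of the current round's cost — the factors of $C$ and of $(\omega_{ij}^+-\omega_{ij}^-)$ cancel, giving
\begin{equation*}
\sum_{\ell=1}^P r_\ell + \sum_{(i,j)\text{ vanilla}}\omega_{ij}^+ + \sum_{(i,j)\text{ piecewise}}(\omega_{ij}^+-\omega_{ij}^-)M_{ij}.
\end{equation*}

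The final step is bookkeeping. Substituting $\sum_{\ell} r_\ell = \sum_\ell \omega_\ell^+ - \sum_{(i,j)\in F}\omega_{ij}^+$ from the definition of the residual weights and splitting $\sum_{(i,j)\in F}\omega_{ij}^+$ according to the scheme used, the vanilla contributions $\omega_{ij}^+$ cancel, leaving $\sum_\ell \omega_\ell^+$ together with the piecewise contributions $(\omega_{ij}^+-\omega_{ij}^-)M_{ij}-\omega_{ij}^+$. Bounding each $M_{ij}$ by $1/\delta$ converts these into $\tfrac1\delta\{(1-\delta)\omega_{ij}^+-\omega_{ij}^-\}$, and since a pair is routed to the piecewise scheme exactly when $(1-\delta)\omega_{ij}^+-\omega_{ij}^-<0$ (the negation of case 1), the piecewise sum coincides with the claimed indicator expression.

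The derivation is essentially routine once the accounting is in place, so the main obstacle is conceptual rather than computational: one must cleanly separate the three sources of randomness — the outer selection of a $\pi$-component, the inner accept--reject for a pair, and the overall accept--reject against $m$ — and track \emph{only} the base-distribution draws through these nested loops, justifying that their expectations compose multiplicatively. The one point deserving care is strictness: the ``less than'' rests on the piecewise bound $M_{ij}<1/\delta$ being strict, so it is attained as soon as at least one pair triggers case 2, the bound otherwise holding with equality.
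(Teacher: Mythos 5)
Your proof is correct and takes essentially the same route as the paper's: one accepted draw from $m$ costs $C$ candidates from $\pi$ on average, the per-candidate cost is obtained by conditioning on the selected component of $\pi$ (namely $\omega_{ij}^+/(\omega_{ij}^+-\omega_{ij}^-)$ for vanilla pairs, at most $1/\delta$ for piecewise pairs, one draw for residuals), and the identity $\sum_{\ell} r_\ell = \sum_{\ell}\omega_\ell^+ - \sum_{(i,j)\in F}\omega_{ij}^+$ yields the claimed bound after the same cancellation. Your additional care --- verifying $C\pi - m = \sum_j s_j g_j \geq 0$ and $\int \pi = 1$, the Wald-type justification for multiplying the two expectations, and the remark that strict inequality only holds when at least one pair is routed to the piecewise scheme --- merely makes explicit what the paper's proof leaves implicit.
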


A direct consequence of this result is to define the optimal pairing scheme (in terms of the number of proposed samples) as the one that minimizes the objective function
\begin{equation*}
\sum_{(i,j)\in F} \left\lbrace (1 - \delta)\omega_{ij}^+ - \omega_{ij}^- \right\rbrace \mathds{1}_{\{(1 - \delta)\omega_{ij}^+ - \omega_{ij}^- < 0\}}.
\end{equation*}
The solution to this optimization problem is equivalent to minimizing the objective function
\begin{equation}
\label{eqn:obj-simplex}
\sum_{(i,j)\in E} \left\lbrace (1 - \delta)\omega_{ij}^+ - \omega_{ij}^- \right\rbrace
\end{equation}
under the linear constraints \eqref{eqn:valid-pair}, \eqref{eqn:const-wp} and \eqref{eqn:const-wn} (refer to the justification in Appendix \ref{sec:app:simplex}). 
The optimal pairing solution can thus be found by an optimization algorithm targeting the above objective, such as the simplex method \citep{dantzig:1963}.

We stress that Algorithm \ref{algo:ar-gen} does not necessarily achieve an overall average acceptance probability of $\delta$ for the optimal pairing. Indeed, the average number of proposed random variables for the pairing writes as
\begin{equation*}
\sum_{i = 1}^P \omega_i^{+}  + \frac{1}{\delta} \sum_{(i,j)\in F} \left\lbrace (1 - \delta)\omega_{ij}^+ - \omega_{ij}^- \right\rbrace 
= \frac{1}{\delta} + \left(1 - \frac{1}{\delta}\right)\sum_{i = 1}^P r_i 
+ \frac{1}{\delta} \sum_{j = 1}^N s_j.
\end{equation*}
It is then lower than $1/\delta$ only when
\begin{equation*}
    \sum_{j = 1}^N s_j \leq (1 - \delta) \sum_{i = 1}^P r_i.
\end{equation*}
For instance, when an optimal pairing has no positive weight residuals, attaining exactly the targeted probability $\delta$ is then achieved solely if we have no negative residuals as well. Even though we control the acceptance probability when sampling from a pair, the reject step towards getting samples of $m$ by simulating from $\pi$ degrades the overall performances. Conversely, if there are no negative weight residuals,
Algorithm \ref{algo:ar-gen} achieves a higher acceptance probability than the user-specified rate $\delta$. This setting does not involve a reject step to get from $\pi$ to $m$. In that case, we do control sampling performances for each pair and each positive weight residual can be simulated exactly.

\section{Comparison experiments}
\label{sec:compar}

In this section, we examine the performances of three methods that return simulations from arbitrary signed mixture distributions $m$, namely
\begin{enumerate}
    \item the vanilla scheme corresponding to the accept-reject method based on the positive part of $m$,
    \item the stratified scheme we proposed for acceptance probabilities $\delta \in\{0.4, 0.6, 0.8\}$ and tolerance levels $\varepsilon \in \{0.1, 0.2, 0.5, 1\}$ compatible with $\delta$,
    \item a numerical inversion of the cumulative distribution function associated with $m$ for a precision of $10^{-10}$ (see Appendix \ref{app:inv-cdf}).
\end{enumerate}
Each method is run to get $n \in \{10, 10^2, 10^3, 10^4\}$ samples from $m$. For a given sample size $n$, we report the proportion $\widehat{\delta}_n$ of accepted proposed variables. Its theoretical value is denoted $\delta$ for both the vanilla and the stratified schemes. We also detail the relative efficiency of a method $A$ compared to a method $B$, defined as the ratio of the running time of $B$ by the running time of $A$. A relative efficiency larger than 1 indicates that $A$ outperforms $B$ in terms of computational budget.
We focus on the relative efficiency $\mathcal{R}_n$ of our method compared to the vanilla approach and the relative efficiency $\mathcal{Q}_n$ of accept-reject based methods compared to the numerical inversion of the cdf.

While the above construction is as generic as possible, we run the comparison on special instances of signed mixtures of exponential families distributions, namely $f_k$ and $g_k$ are both either Normal or Gamma distributions. Both families enjoy an explicit condition for \eqref{eqn:cond-pair} to hold and hence define a proper two-component signed mixture of $f_k$ and $g_k$ (see Appendix \ref{app:exp-fam}). We also provide details on how to build the partition $D_0, \ldots, D_{n_\varepsilon}$ for such a two-component signed mixture in Appendix \ref{app:hist}. For each family, we consider two kinds of numerical experiments.

\subsection{Alternating signed mixtures}

The first comparison is provided for a particular signed mixture that writes as the alternating sum 
\begin{equation}
\label{eqn:alt-sum}
    m \propto \sum_{k = 1}^K \left(\frac{a_k^\star}{a_k^\star - 1}\right)
    \left(
\frac{a_k^\star f_k - g_k}{a_k^\star - 1},
\right)
\end{equation}
where each term involves the two-component signed mixture \eqref{eqn:def-pair} of $f_k$ and $g_k$ for the minimal positive weight possible. Such a signed mixture exhibits a natural pairing structure where the weight of each pair in the overall signed mixture is inversely proportional to the average acceptance probability of the pair. There exists at least one solution to the optimisation problem that comes with no residual mixture.

\begin{figure}[t]%
\centering
\includegraphics[width=0.49\textwidth]{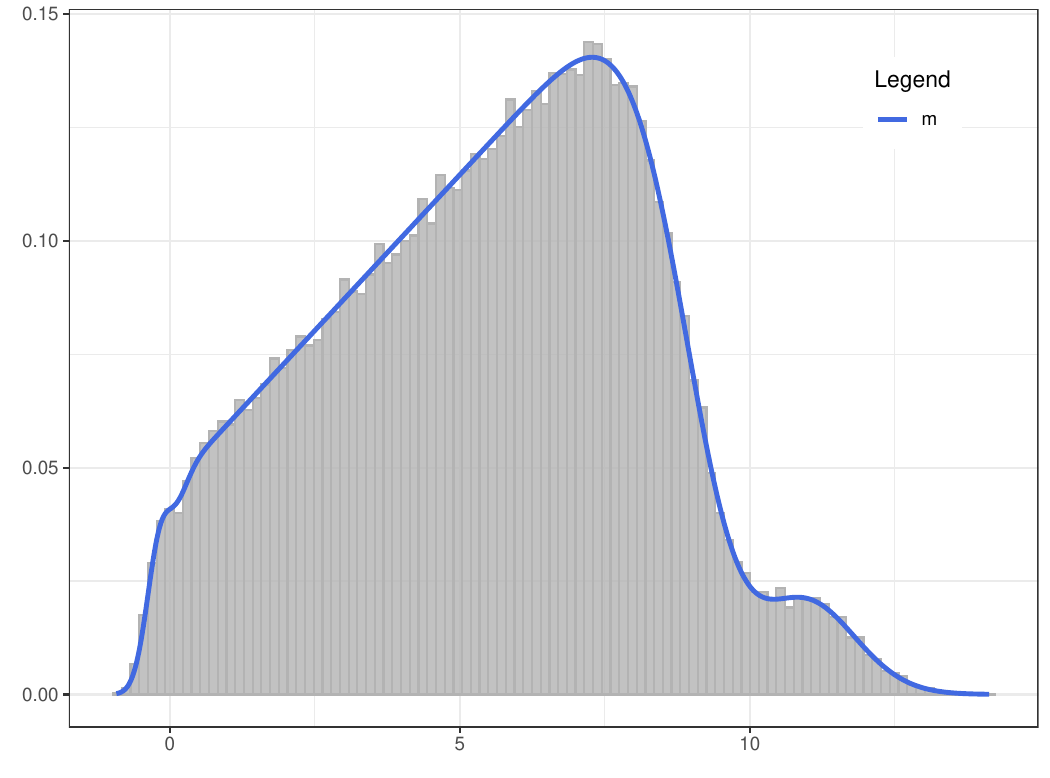}
\includegraphics[width=0.49\textwidth]{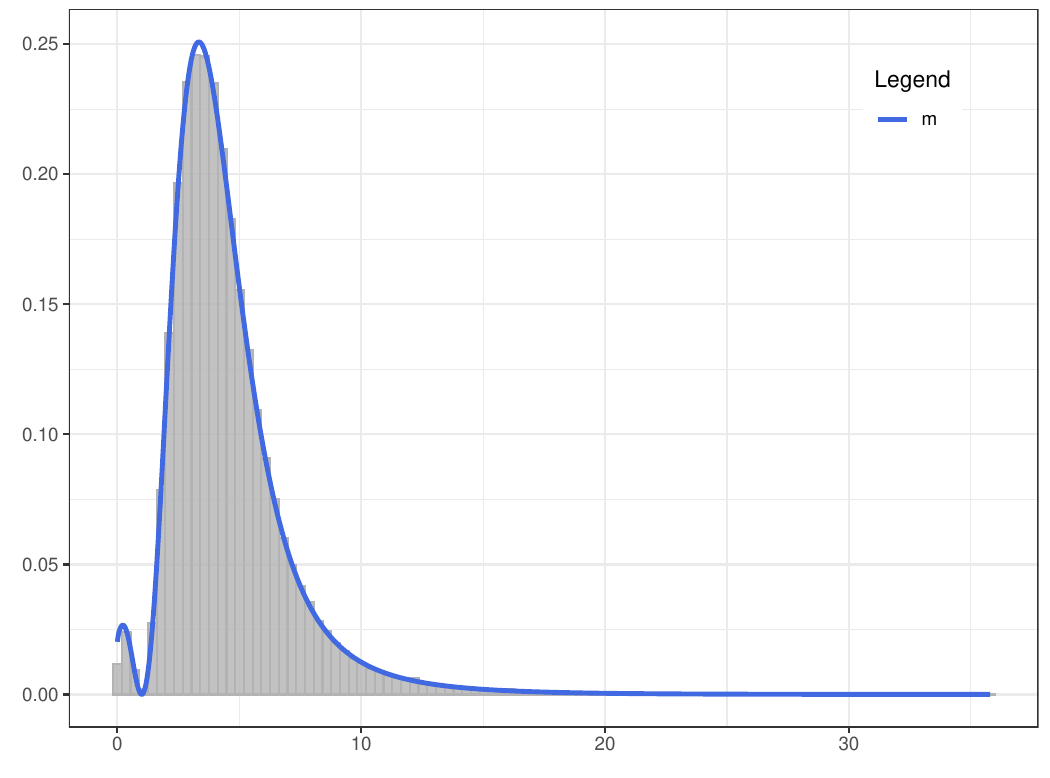}
\caption{Histogram of $10^5$ samples from an alternating signed mixture \eqref{eqn:alt-sum} with Normal distributions \eqref{eqn:norm-1} (left) and Gamma distributions \eqref{eqn:gam-1} (right).}
\label{fig:hist-examples}
\end{figure}

\paragraph*{Signed mixture of Normal distributions} A two-component Normal signed mixture can only be defined when the variance of the positive weight component is strictly greater than the variance of the negative one (see Appendix \ref{app:normal}).
We thus consider the signed mixture \eqref{eqn:alt-sum} with $K = 51$ and for all $k$, $1 \leq k \leq 51$, 
\begin{equation}
\label{eqn:norm-1}
\begin{aligned}
f_k & \equiv \mathcal{N}(\mu_k, \sigma_k^2)\\
g_k & \equiv \mathcal{N}(\mu_k + 0.01, (\sigma_k - 0.01)^2)
\end{aligned}
\quad\text{with}\quad
\begin{cases}
\mu_k = 0.2\, (k - 1),
\\
\sigma_k = 0.25 +  0.015\, (k - 1),
\end{cases}
\end{equation}
and
\begin{equation*}
a_k^\star = \frac{\sigma_k}{\sigma_k - 0.1} \exp\left\lbrace
\frac{0.01}{4\sigma_k - 0.02}
\right\rbrace.
\end{equation*}

\paragraph*{Signed mixture of Gamma distributions} A two-component Gamma signed mixture can only be defined when the shape and rate of the positive weight component are lower, respectively strictly lower, than the shape and rate of the negative one (see Appendix \ref{app:gamma}). As an example of the signed mixture \eqref{eqn:alt-sum}, we consider a setting with $K = 41$ and for all $k$, $1 \leq k \leq 41$, 
\begin{equation}
\label{eqn:gam-1}
\begin{aligned}
f_k & \equiv \Gamma(\alpha_k, \beta_k)\\
g_k & \equiv \Gamma(\alpha_k + 0.01, \beta_k + 0.01)
\end{aligned}
\quad\text{with}\quad
\begin{cases}
\alpha_k = 1 + 0.1\, (k - 1),
\\
\beta_k = 0.25 +  0.04375\,(k - 1),
\end{cases}
\end{equation}
and
\begin{equation*}
a_k^\star = \frac{\Gamma(\alpha_k)}{\Gamma(\alpha_k + 0.1(k - 1))}\left(\frac{\beta_k}{\beta_k + 0.01}\right)^{0.01(k - 1)}\exp\left\lbrace 0.01(1 - k)\right\rbrace.
\end{equation*}

\begin{table}[!b]
\caption{Sampling performances for alternating signed mixtures \eqref{eqn:alt-sum} of Normal distributions \eqref{eqn:norm-1} and Gamma distributions \eqref{eqn:gam-1}.\label{tab:alt-sum}}
{\centering
\begin{tabular*}{\textwidth}{@{\extracolsep\fill}lcccc|ccc|cc|c}
\hline
\hline
& \multicolumn{9}{@{}c@{}|}{Stratified} & \multirow{3}{*}{Vanilla}
\\
\cline{2-10}
$\varepsilon$  & $0.1$ & $0.2$ & $0.5$ & $1.0$ & $0.1$ & $0.2$ & $0.5$ & $0.1$ & $0.2$ \\
\hline
\multicolumn{11}{@{}c@{}}{\textsc{Normal signed mixture}} \\
\hline
$\delta$ & \multicolumn{4}{@{}c@{}|}{0.4} & \multicolumn{3}{@{}c@{}|}{0.6} & \multicolumn{2}{@{}c@{}|}{0.8} & 0.018 \\
\hline
$\widehat{\delta}_{10}$ & 0.156 & 0.256 & 0.500 & 0.769 & 0.588 & 0.769 & 0.833 & 0.278 & 1.000 & 0.017\\
$\mathcal{R}_{10}$ & 1.876 & 2.136 & {\bf 2.153} & 2.125 & 1.604 & 1.720 & {\bf 1.738} & 1.128 & {\bf 1.208} & 1.000\\
$\mathcal{Q}_{10}$ & 2.892 & 3.293 & {\bf 3.319} & 3.276 & 2.473 & 2.652 & {\bf 2.680} & 1.738 & {\bf 1.862} & 1.542\\
\hline
$\widehat{\delta}_{10^2}$ & 0.407 & 0.592 & 0.417 & 0.633 & 0.389 & 0.637 & 0.645 & 0.820 & 0.714 & 0.018\\
$\mathcal{R}_{10^2}$ & 10.70 & 11.90 & 10.93 & {\bf 12.71} & 10.49 & {\bf 11.69} & 11.66 & 8.374 & {\bf 9.995} & 1.000\\
$\mathcal{Q}_{10^2}$ & 2.213 & 2.461 & 2.262 & {\bf 2.630} & 2.170 & {\bf 2.417} & 2.413 & 1.732 & {\bf 2.067} & 0.207\\
\hline
$\widehat{\delta}_{10^3}$ & 0.422 & 0.442 & 0.480 & 0.657 & 0.615 & 0.607 & 0.737 & 0.810 & 0.868 & 0.017\\
$\mathcal{R}_{10^3}$ & 26.05 & 27.25 & 27.76 & {\bf 27.94} & 11.39 & {\bf 19.94} & 16.08 & 22.25 & {\bf 23.70} & 1.000\\
$\mathcal{Q}_{10^3}$ & 4.521 & 4.730 & 4.817 & {\bf 4.849} & 1.976 & {\bf 3.461} & 2.791 & 3.861 & {\bf 4.114} & 0.174\\
\hline
$\widehat{\delta}_{10^4}$ & 0.411 & 0.426 & 0.499 & 0.640 & 0.604 & 0.641 & 0.776 & 0.827 & 0.855 & 0.018\\
$\mathcal{R}_{10^4}$ & {\bf 38.05} & 36.51 & 37.02 & 37.04 & 37.59 & 38.65 & {\bf 38.84} & 36.96 & {\bf 37.96} & 1.000\\
$\mathcal{Q}_{10^4}$ & {\bf 6.233} & 5.980 & 6.063 & 6.068 & 6.157 & 6.330 & {\bf 6.362} & 6.053 & {\bf 6.218} & 0.164\\
\hline
\multicolumn{11}{@{}c@{}}{\textsc{Gamma signed mixture}}
\\
\hline
$\delta$ & \multicolumn{4}{@{}c@{}|}{0.4} & \multicolumn{3}{@{}c@{}|}{0.6} & \multicolumn{2}{@{}c@{}|}{0.8} & 0.008 \\
\hline
$\widehat{\delta}_{10}$ & 0.909 & 0.769 & 1.000 & 1.000 & 0.833 & 1.000 & 1.000 & 1.000 & 0.714 & 0.006\\
$\mathcal{R}_{10}$ & 0.674 & 0.715 & 0.783 & 0.790 & 0.426 & 0.428 & 0.428 & 0.386 & 0.471 & $\mathbf{1.000}$\\
$\mathcal{Q}_{10}$ & 1.731 & 1.836 & 2.010 & 2.028 & 1.095 & 1.100 & 1.098 & 0.990 & 1.209 & {\bf 2.568}\\
\hline
$\widehat{\delta}_{10^2}$ &  0.435 & 0.408 & 0.595 & 0.495 & 0.752 & 0.498 & 0.820 & 0.990 & 0.962 & 0.010\\
$\mathcal{R}_{10^2}$ & 2.606 & 2.422 & 3.240 & {\bf 3.744} & 2.029 & {\bf 2.109} & 1.606 & 1.740 & {\bf 2.178} & 1.000\\
$\mathcal{Q}_{10^2}$ & 1.613 & 1.499 & 2.005 & {\bf 2.317} & 1.256 & {\bf 1.305} & 0.994 & 1.077 & {\bf 1.348} & 0.619\\
\hline
$\widehat{\delta}_{10^3}$ & 0.418 & 0.385 & 0.487 & 0.648 & 0.646 & 0.520 & 0.640 & 0.884 & 0.858 & 0.009\\
$\mathcal{R}_{10^3}$ & 20.11 & 20.51 & 25.82 & {\bf 27.13} & 17.28 & 18.34 & {\bf 22.77} & 15.23 & {\bf 19.45} & 1.000\\
$\mathcal{Q}_{10^3}$ & 5.728 & 5.844 & 7.355 & {\bf 7.727} & 4.923 & 5.225 & {\bf 6.487} & 4.339 & {\bf 5.540} & 0.285\\
\hline
$\widehat{\delta}_{10^4}$ & 0.420 & 0.426 & 0.456 & 0.523 & 0.581 & 0.644 & 0.602 & 0.828 & 0.852 & 0.009\\
$\mathcal{R}_{10^4}$ & 54.03 & 56.36 & 70.68 & {\bf 90.18} & 61.65 & 64.92 & {\bf 77.01} & 66.44 & {\bf 75.97} & 1.000\\
$\mathcal{Q}_{10^4}$ & 13.44 & 14.02 & 17.59 & {\bf 22.44} & 15.34 & 16.15 & {\bf 19.16} & 16.53 & {\bf 18.90} & 0.249\\
\hline
\hline
\end{tabular*}
}
\footnotesize{$\delta$, $\widehat\delta_n$: theoretical average acceptance probability of the method and its estimated value for a $n$-sample. $\mathcal{R}_n, \mathcal{Q}_n$: relative efficiency for a $n$-sample of the sampling method compared respectively to the vanilla method and the numerical inversion of the cdf.}
\end{table}

\paragraph*{Comments}
Table \ref{tab:alt-sum} displays the results for both families. In both examples, the simplex method retrieves the natural pairing associated with the alternating sum form \eqref{eqn:alt-sum} for all $\delta\in\{0.4, 0.6, 0.8\}$. The stratified method overall outperforms both the vanilla method and the numerical inversion of the cdf, regardless of the selected acceptance probability $\delta$ and the tolerance level $\varepsilon$. Unless simulating a dozen variables, our method is between 1.6 and 90 times faster than the vanilla method while the reduction in computation time is smaller when compared to the numerical inverse of the cdf but can still go up to a factor 22.
In general, for a given acceptance probability $\delta$, increasing the tolerance level $\varepsilon$ results in a lower computational cost of our stratified method, supporting the hypothesis that integration error prevails when building the partition. Conversely, the higher the acceptance probability, the higher the cost of our method. This pattern directly results from the construction of the partition $D_0, \ldots, D_{n_\varepsilon}$, where a higher acceptance probability implies a larger domain to partition and a smaller tolerance requires finer partition elements. 
Lastly, the computational benefit increases with the number of variables simulated, as the cost of both the simplex method and the computation of the partition becomes negligible in front of the cost of sampling random variables.

\subsection{Randomly generated signed mixtures}

The second comparison is based on a collection of 2,800 randomly generated signed mixtures (see Appendix \ref{app:rand-mix}) with a wide range of variety from the number of components to the average acceptance probability of the vanilla method. Table \ref{tab:rep-rand-negmix} details the distribution of the models into 7 categories depending on the acceptance probability of the vanilla method. The aim was to have models with arbitrary low vanilla acceptance probability in order to challenge our approach in situations where the vanilla method may perform extremely poorly. Models considered also encompass a few components up to a hundred with varying proportions of positive and negative weight components, ensuring then real diversity in the complexity of models (see Figure \ref{fig:summary-rand-negmix}).

\begin{table}[!h]
    \caption{Repartition of the 2,800 randomly generated signed mixtures of Normal distributions and Gamma distributions according to the average acceptance probability $\delta$ of the vanilla accept-reject method.\label{tab:rep-rand-negmix}}
    \centering
    \begin{tabular*}{\textwidth}{@{\extracolsep\fill}lccccccc}
    \hline
    $\delta$ (in \%) & $\leq 10^{-2}$ & $(10^{-2}, 0.1]$ & $(0.1, 1]$ & $(1, 5]$ & $(5, 10]$ & $(10, 20]$ & $(20, 35]$ \\
    \hline
     Normal distributions & 400 & 400 & 400 & 400 & 393 & 404 & 403
     \\
     Gamma distributions & 287 & 505 & 408 & 400 & 420 & 480 & 300 \\
     \hline
    \end{tabular*}
\end{table}

\begin{figure}[!t]%
\centering
\includegraphics[width=.9\textwidth]{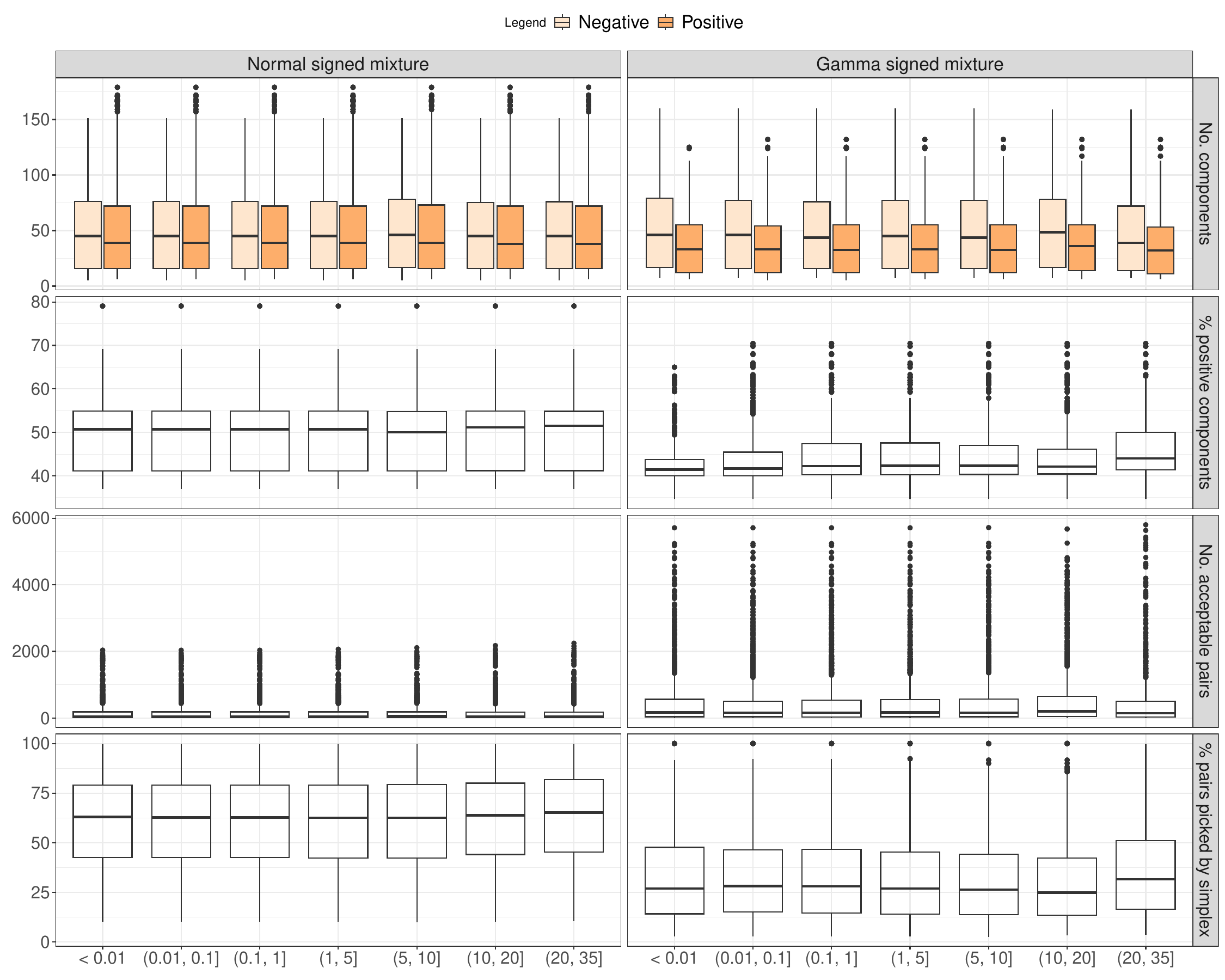}
\caption{Summaries per vanilla average acceptance probability categories (x-axis in \%) of the 2,800 randomly generated signed mixtures of, respectively, Normal distributions (left) and Gamma distributions (right): number of positive and negative weight components (top row), proportion of positive weight components in the model (second row), number of acceptable pairs in the model (third row) and proportion of acceptable pairs selected by the simplex algorithm (bottom row).}
\label{fig:summary-rand-negmix}
\end{figure}

\paragraph*{Comments}
The running time of our method does not depend significantly on the user-specified acceptance probability nor the tolerance level (see Figure \ref{fig:impact-delta-eps}). However, we can point out a consistent pattern regarding the influence of both $\delta$ and $\varepsilon$. Allowing a larger tolerance level leads to a reduced cost since it implies building a partition with less elements. However, opting for a larger acceptance probability happens to increase the running time. In such settings, we end up with
a larger domain to partition and a tolerance level restricted to a smaller range. Hence this results in increasing the number of partition terms, as we aim at a more precise piecewise approximation of the signed mixture. Our method is not designed to efficiently achieve acceptance probability arbitrary close to 1. Instead, users can benefit from reasonably lowering the acceptance probability $\delta$. Obviously, this holds as long as $\delta$ remains larger than the vanilla acceptance probability and the simulation cost does not exceed the advantage of the stratification.

\begin{figure}[!t]
    \centering
    \includegraphics[width = \textwidth]{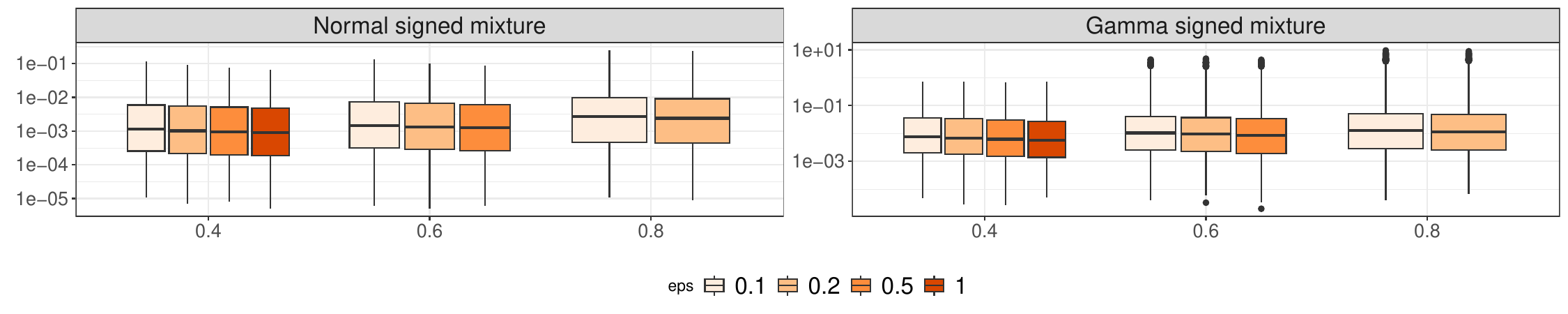}
    \caption{Running time (in sec.) of the stratified method with respect to user-specified acceptance probability $\delta$ (x-axis) and tolerance level $\epsilon$ (\texttt{eps}) for the 2,800 randomly generated signed mixtures of, respectively, Normal distributions (left) and Gamma distributions (right).}
    \label{fig:impact-delta-eps}
\end{figure}

The relative efficiency of our stratified solution compared to the vanilla ranges from around $10^{-5}$ to $10^5$ and unsurprisingly decreases with the vanilla average acceptance probability (see Figure \ref{fig:time-perf}, top row). The stratified approach far outperforms the vanilla method on challenging situations, that is when an accept-reject from the positive part would lead to an average acceptance probability lower than 1\%, a domination found even for very small samples. For a hundred samples, sampling from the positive part of the signed mixture becomes equivalent to, if not better than, the stratified solution when the vanilla average acceptance probability exceeds 5\%. For larger sample sizes, the relative efficiency remains in general larger than $1$. Furthermore, we point out that the median running time of our method for a given sample size is quite stable across the different categories of vanilla acceptance probabilities and mostly lower than the second (see Figure \ref{fig:time-perf}, second row). In comparison, the median running time of the vanilla method strongly depends on its associated acceptance probability (see Figure \ref{fig:time-perf}, third row). This asymmetry means that in situations where the vanilla method performs better, the actual computational benefit is of a negligible scale. Conversely, our method presents a reduction of the simulation cost that is more than substantial in challenging settings, cutting the cost for instance from a few minutes to less than a second.

\begin{figure}[!h]
    \centering
    \includegraphics[width = \textwidth]{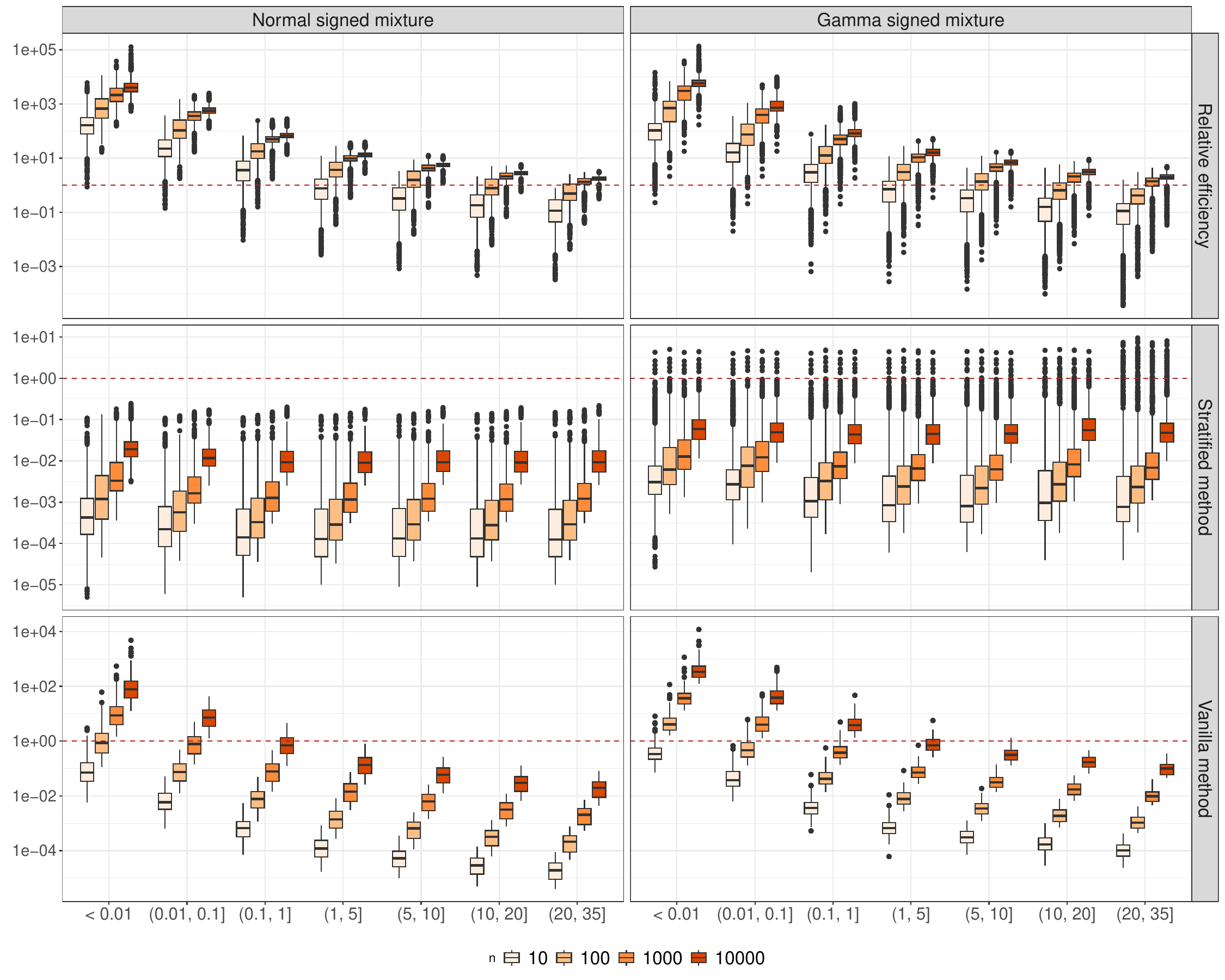}
    \caption{Time performances of accept-reject based methods per vanilla average acceptance probability categories (x-axis in \%) and number of draws $n$ for the 2,800 randomly generated signed mixtures of, respectively, Normal distributions (left) and Gamma distributions (right): 
    relative efficiency of the stratified method compared to the vanilla method (top row), running time (in sec.) of, respectively, the stratified method (second row) and the vanilla method (bottom row).}
    \label{fig:time-perf}
\end{figure}

In the stratified scheme, we have better control of the simulation cost, even in the presence of negative weight residuals (see Figure \ref{fig:rate-strat}), due to the acceptance probability constraint on each pair. This explains the general median stability we observe on Figure \ref{fig:time-perf} regardless of the overall weight of the positive part in the model. The major elements of influence are the computation of the partition and of the pairing using the simplex method. Regarding the partition, we already observed that it does not alter strongly the computational cost of our solution and hence the relative efficiency compared to the vanilla method, but it can be further confirmed with Figure \ref{fig:rel-eff-delta-eps} in Appendix. As for the pairing step, Figure \ref{fig:rel-eff-pairs} illustrates 
the influence of the number $\vert E \vert$ of acceptable pairs on the computational budget. Namely, the cost of our approach increases as the number of pairs increases, and the method becomes less competitive than the vanilla approach.
Indeed, the simplex algorithm is then used to solve an optimization problem involving $2\vert E\vert$ variables and $\vert E\vert + N + P$ constraints. For a moderate number of samples, the efficiency of our solution is reduced when the model contains over a thousand acceptable pairs. In this regime, the simplex may prove more time-consuming than simulating even numerous random variables.

\begin{figure}[!h]
    \centering
    \includegraphics[width = \textwidth]{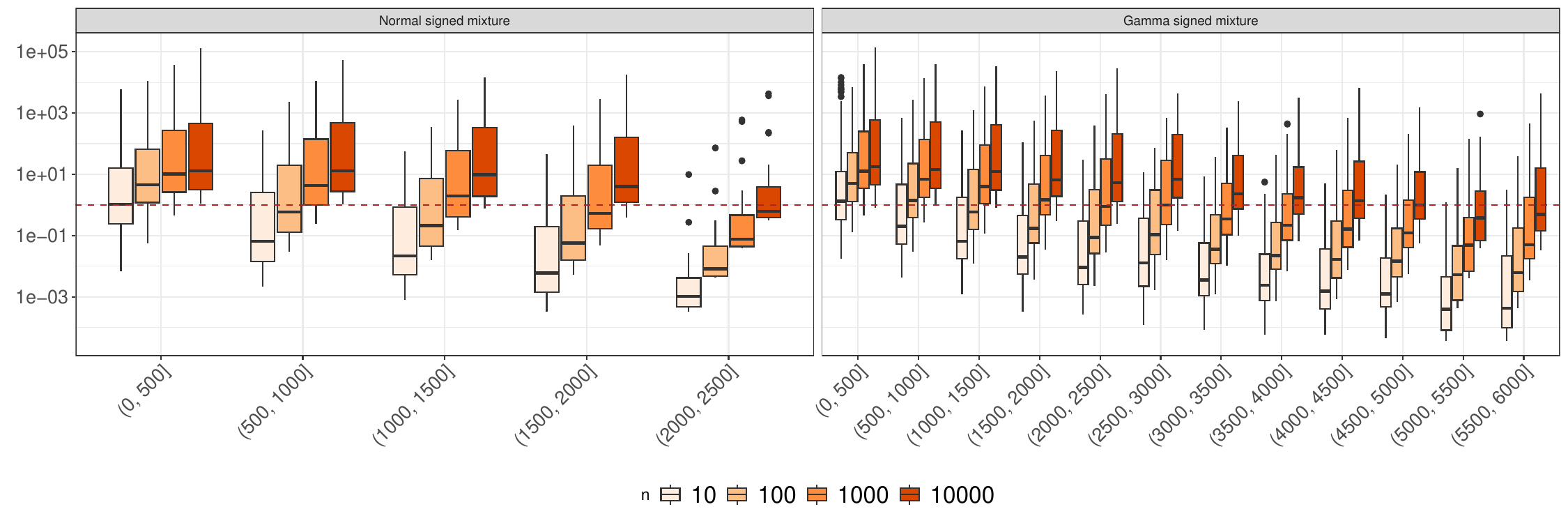}
    \caption{Relative efficiency of the stratified method compared to the vanilla method with respect to the number of acceptable pairs (x-axis) and the number of draws $n$, for the 2,800 randomly generated signed mixtures of, respectively, Normal distributions (left) and Gamma distributions (right).}
    \label{fig:rel-eff-pairs}
\end{figure}

Computing a numerical inverse of the cdf does not exhibit a practical advantage over our accept-reject based method from a computational perspective (see Figure \ref{fig:rel-eff-inv-cdf}). 
Indeed, the median relative efficiency of our method compared to the numerical inverse is close to 1, if not greater. Additionally, the numerical inverse solution only generates samples from an approximate probability measure. 
As shown in the bottom row of Figure \ref{fig:rel-eff-inv-cdf}, this surrogate quantile function is solely beneficial compared to the vanilla method when the latter exhibits low acceptance probability. Yet, our approach is specifically designed to provide an efficient and exact solution in such a setting.

\begin{figure}[!h]
    \centering
    \includegraphics[width = \textwidth]{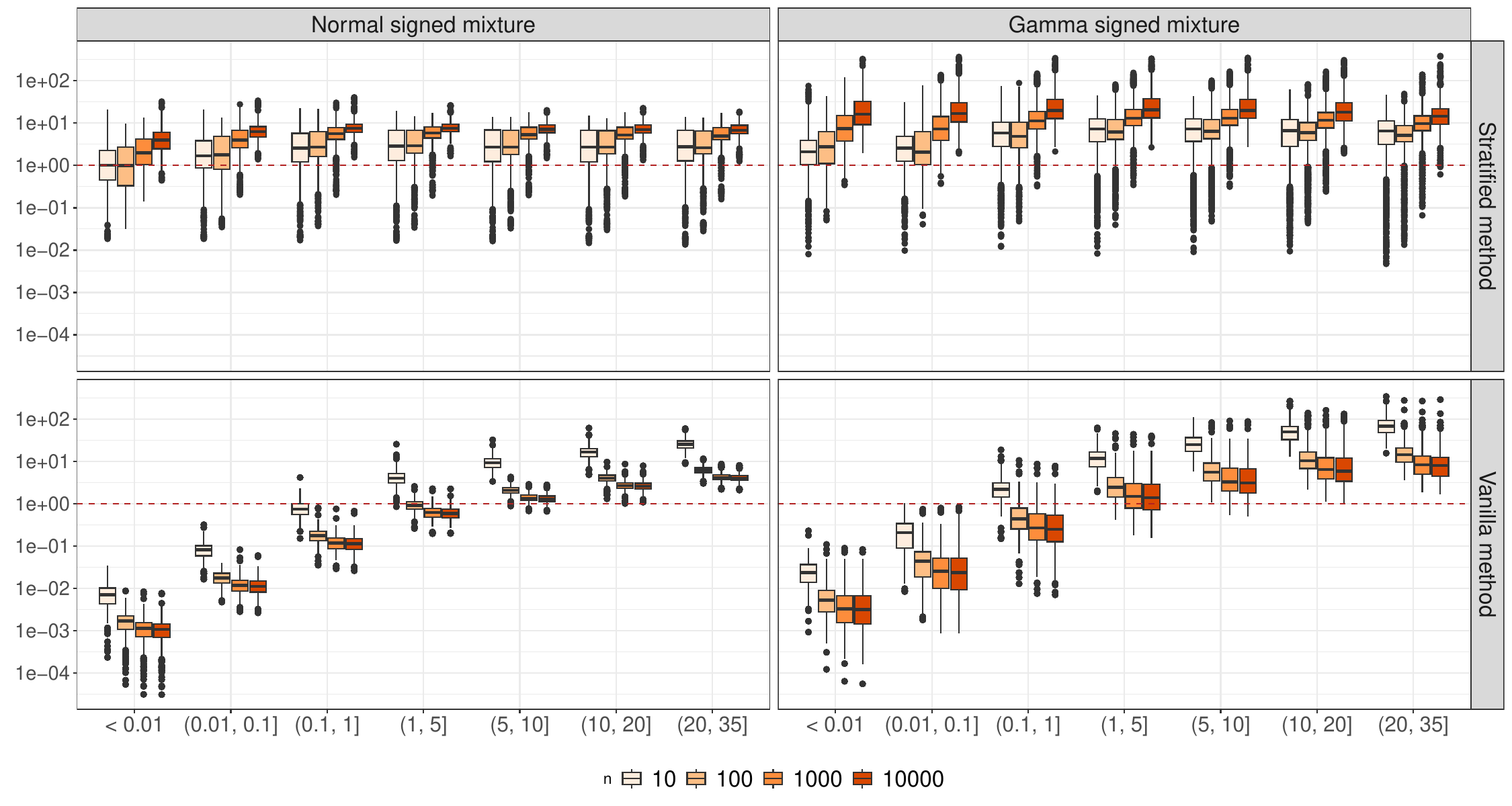}
    \caption{Relative efficiency of, respectively, the stratified method (top row) and the vanilla method (bottom row) compared to the numerical inverse cdf per vanilla average acceptance probability categories (x-axis in \%), and number of draws $n$ for the 2,800 randomly generated signed mixtures of, respectively, Normal distributions (left) and Gamma distributions (right).}
    \label{fig:rel-eff-inv-cdf}
\end{figure}

\section{Conclusions}

The challenge of simulating a signed mixture \eqref{eqn:signed-mixt} surprisingly differs from the standard simulation of an unsigned mixture in that the negative components of \eqref{eqn:signed-mixt} have no natural association with a latent variable. It thus proves impossible to directly eliminate simulations that issue from these negative terms, i.e., to formalize a negative version of accept-reject and one has to resort to more rudimentary approaches. As discussed above, sampling from a signed mixture using only the positive part of the density may prove cumbersome, especially when the weight of the latter is small. While elementary, our alternative approach achieves noticeably superior computational performances by combining a simplex step towards identifying an efficient decomposition of the model into a well-balanced set of two-component mixtures, and a piecewise constant approximation of these two-component distributions. Controlling a lower bound on the average acceptance probability ensures steady performance, regardless of the overall weight of the positive part. Furthermore, this alternative performs most satisfactorily relative to the inverse cdf approach, a feat explained in part by the necessity to numerically invert the cdf, even in cases when the quantile function of both positive and negative components is known.

\section*{Acknowledgements}
A discussion with Murray Pollock (University of Newcastle) was instrumental in sparkling our interest in
the matter. The first author has been partly supported by a senior chair (2016-2021) from
l'Institut Universitaire de France, by a Prairie chair from the Agence
Nationale de la Recherche (ANR-19-P3IA-0001) and by the European Union under the (2023-2030) ERC Synergy grant 101071601 (OCEAN). 
Views and opinions expressed are however those of the author only and do not necessarily reflect those of the European Union or the European Research Council Executive Agency. Neither the European Union nor the granting authority can be held responsible for them.

\section*{Code availability}
The code used throughout the paper is available at \url{https://github.com/jstoehr/negmix}.

\bibliographystyle{abbrvnat}
\bibliography{biblio}

\begin{appendices}

\section{Two-component signed mixtures}
\label{app:two-comp}

\subsection{Lower bound property}
\begin{lemma}
\label{lemma:a-star}
Assuming two separate probability density functions $f$ and $g$ such that $g$ is absolutely continuous with respect to $f$, then
\begin{align}
a^\star = \sup_{\supp(f)} \frac{g}{f} > 1.
\end{align}
\end{lemma}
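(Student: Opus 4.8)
The plan is to argue by contradiction, exploiting the fact that $f$ and $g$ both integrate to one together with the pointwise bound that $a^\star \le 1$ would impose. First I would record a preliminary observation: since $g$ is absolutely continuous with respect to $f$, one has $\supp(g)\subseteq\supp(f)$, so that $g$ vanishes (Lebesgue-almost everywhere) on the complement of $\supp(f)$. This lets me compare $f$ and $g$ on all of $\mathbb{R}$ rather than merely on $\supp(f)$.

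Suppose, for contradiction, that $a^\star \le 1$. By the very definition of $a^\star$ as the supremum of $g/f$ over $\supp(f)$, this yields $g(x) \le a^\star f(x) \le f(x)$ for every $x\in\supp(f)$. Combined with the preliminary observation, $g \le f$ holds almost everywhere on $\mathbb{R}$, so that $f - g \ge 0$ almost everywhere. Integrating and using that both are probability densities gives
\begin{equation*}
\int (f - g) = 1 - 1 = 0.
\end{equation*}
A nonnegative function with vanishing integral is zero almost everywhere, whence $f = g$ almost everywhere, contradicting the hypothesis that $f$ and $g$ are distinct probability density functions. Therefore $a^\star > 1$.

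The main obstacle is not the chain of inequalities, which is routine, but rather making precise the two notions that carry the argument: that the word \emph{separate} (distinct) means $f \neq g$ on a set of positive Lebesgue measure, so that the almost-everywhere identity $f = g$ is a genuine contradiction; and that absolute continuity delivers $g = 0$ outside $\supp(f)$, which is what allows the pointwise inequality on $\supp(f)$ to be upgraded to a global one before integrating. Once these are in place, the strictness is automatic, since every scenario with $a^\star \le 1$ collapses to $f = g$; note also that the argument never requires the supremum to be attained, only the value of $a^\star$ as an upper bound of $g/f$.
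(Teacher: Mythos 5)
Your proof is correct and takes essentially the same route as the paper: assume $a^\star \le 1$, deduce $g \le f$ on $\supp(f)$ (extended globally via $\supp(g)\subseteq\supp(f)$), integrate both densities to $1$, and derive a contradiction with $f \neq g$. The only cosmetic difference is that the paper partitions $\supp(f)$ into the set where $f=g$ and its complement to get a strict integral inequality, whereas you integrate $f-g \ge 0$ to zero and conclude $f=g$ a.e.; your variant in fact makes explicit the positive-measure role of ``separate'' that the paper's strict inequality uses implicitly.
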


\begin{proof}
Let assume $a^\star \leq 1$ and denote $E = \{x \in \supp(f) \mid f(x) = g(x)\}$. We have for all $x \in\supp(f)\setminus E$, $g(x) < f(x)$ and 
\begin{align}
\int_{E} g(x)\mathrm{d} x =  \int_{E} f(x)\mathrm{d} x
\quad\text{and}\quad
\int_{\supp(f)\setminus E} g(x)\mathrm{d} x <  \int_{\supp(f)\setminus E} f(x)\mathrm{d} x.
\end{align}
Since $\supp(g) \subseteq \supp(f)$, we thus have
\begin{align}
1 = \int_{\supp(f)} g(x)\mathrm{d} x < \int_{E} f(x)\mathrm{d} x + \int_{\supp(f)\setminus E} f(x) \mathrm{d} x = 1.
\end{align}
\textit{Reductio ad absurdum} complete.
\end{proof}

\subsection{Results on stratified sampling scheme}
\label{app:strat}

\subsubsection{Average acceptance probability of Algorithm \ref{algo:ar2}}
\label{sec:proof-algo-1}

\paragraph*{Behaviour in the tails} The distribution $m$ restricted to $D_0$ satisfies
\begin{equation*}
\frac{1}{m(D_0)}m(x)\mathds{1}_{D_0}(x) 
\leq \frac{af(D_0)}{(a-1)m(D_0)} \frac{1}{f(D_0)}f(x)\mathds{1}_{D_0}(x).
\end{equation*}
To get one sample from $m$ restricted to $D_0$, we need on average 
\begin{equation*}
M_0 = \frac{af(D_0)}{(a-1)m(D_0)}
\end{equation*}
samples from the distribution $f$ truncated to $D_0$.

\paragraph*{Behaviour in $D_1, \ldots D_{n}$} The distribution $m$ restricted to $D_i$ satisfies
\begin{equation*}
\frac{1}{m(D_i)}m(x)\mathds{1}_{D_i}(x) 
\leq \frac{h_i \vert D_i \vert}{(a-1)m(D_i)} \frac{1}{\vert D_i \vert}\mathds{1}_{D_i}(x).
\end{equation*}
To get one sample from $m$ restricted to $D_i$, we need on average
\begin{equation*}
M_i = \frac{h_i\vert D_i\vert}{(a-1)m(D_i)}
\end{equation*}
samples from the uniform distribution on $D_i$.

\paragraph*{Global behaviour} To get one sample from $m$, we need to propose on average
\begin{equation*}
\sum_{i = 0}^n m(D_i) M_i = \frac{a}{a - 1} f(D_0) + \frac{1}{a-1}\sum_{i = 1}^{n}h_i\vert D_i\vert = M
\end{equation*}
random variables.

\begin{remark}
Sampling from the distribution $f$ restricted to $D_0$ is not necessarily straightforward and might require an accept-reject scheme as well. Both methods based on piecewise proposals have nevertheless still the same acceptance probability on average. If we need $N_0$ samples from a proposal to get one sample from $f$ restricted to $D_0$, Algorithm \ref{algo:ar2} then requires simulating
\begin{equation*}
\tilde{M} = \frac{a}{a - 1} f(D_0)N_0 + \frac{1}{a-1}\sum_{i = 1}^{n}h_i\vert D_i\vert
\end{equation*}
random variables. Conversely, sampling from the dominating piecewise function would require
\begin{equation*}
    M\left\lbrace \frac{af(D_0)}{M(a-1)}N_0 + \frac{1}{M(a - 1)} \sum_{i = 1}^n h_i\vert D_i\vert\right\rbrace = \tilde{M}
\end{equation*}
random variables.
\end{remark}

\subsubsection{Proof of Lemma \ref{lem:accept-proba}}
\label{sec:proof:lem:accept-proba}

\begin{proof}
We have
\begin{align*}
\frac{1}{a-1}\sum_{i = 1}^{n} h_i \vert D_i\vert  \limite{n}{+\infty} \int_{\supp(f)\setminus D_0} m(x) \mathrm{d}x
= 1 - \frac{af(D_0) - g(D_0)}{a-1}.
\end{align*}
Hence, for all $\varepsilon > 0$, there exists $n_\varepsilon$, such that for all $n \geq n_\varepsilon$
\begin{align*}
\left\vert
M - \frac{a - 1 + g(D_0)}{a-1}
\right\vert \leq \varepsilon.
\end{align*}
Given $\varepsilon \in [0, (1 - \delta)/\delta)$, 
if
\begin{equation*}
g(D_0) = \frac{(a - 1)\{1 - \delta(\varepsilon + 1)\}}{\delta},
\end{equation*}
then
\begin{equation*}
\left\vert M - \frac{1}{\delta} + \varepsilon \right\vert \leq \varepsilon.
\end{equation*}
This leads to $1/M \geq \delta$.
\end{proof}

\begin{remark}
Under the assumption of Lemma \ref{lem:accept-proba}, we have
\begin{align*}
M  \limite{n}{+\infty} \frac{1}{\delta} - \varepsilon.
\end{align*}
\end{remark}

\begin{remark}
A direct consequence of Lemma \ref{lem:accept-proba} is that, if we pick the partition of $\supp(f)\setminus D_0$ such that
\begin{equation*}
    1 - m(D_0) + \varepsilon = \frac{1}{a - 1}\sum_{i = 1}^{n_\varepsilon} h_i\vert D_i\vert,
\end{equation*}
then using the assumption on $g(D_0)$ we get
\begin{equation*}
    M = \frac{a}{a-1}f(D_0) + 1 - m(D_0) + \varepsilon
    = 1 + \frac{1}{a-1}g(D_0) + \varepsilon = \frac{1}{\delta}.
\end{equation*}
\end{remark}

\subsection{Exponential families examples}
\label{app:exp-fam}
Assume that, within the context of Section \ref{sec:2cpt}, the terms $f$ and $g$ are both distributions from the same exponential family 
\begin{equation*}
\mathcal{F} = \left\lbrace
c(\theta)h(x)\exp\{\eta(\theta)^\top T(x)\}\,;\, x\in\mathbb{R}^d,\, \theta\in\Theta  \subseteq \mathbb{R}^q
\right\rbrace.
\end{equation*}
A pairing of $f$ and $g$, parametrized respectively by $\theta^+$ and $\theta^-$, into a two-component signed mixture is thus possible if
\begin{equation}
\label{eqn:a-exp-fam}
a^\star = \sup_{x\in\supp(f)} \{\eta(\theta^-) - \eta(\theta^+)\}^\top T(x) < +\infty.
\end{equation}

\subsubsection{Example of Normal distributions} 
\label{app:normal}
Let $f \equiv \mathcal{N}(\mu^+, \sigma_+^2)$ and $g \equiv \mathcal{N}(\mu^-, \sigma_-^2)$. Since
\begin{equation*}
-\frac{(x - \mu^-)^2}{2\sigma_-^2} + \frac{(x - \mu^+)^2}{2\sigma_+^2} \underset{\pm\infty}{\sim}
-x^2\left(\frac{\sigma_+^2 - \sigma_-^2}{2\sigma_-^2\sigma_+^2}\right)
\end{equation*}
condition \eqref{eqn:a-exp-fam} is fulfilled if $\sigma_-^2 < \sigma_+^2$ (or if $\mu^+ = \mu^-$ and $\sigma_-^2 = \sigma_+^2$ which is of no interest). 
Assuming $\sigma_-^2 < \sigma_+^2$, critical points are then solution of
\begin{equation*}
\frac{\mu^-}{\sigma_-^2} - \frac{\mu^+}{\sigma_+^2} + 2x\left(
-\frac{1}{2\sigma_-^2} + \frac{1}{2\sigma_+^2} 
\right) = 0.
\end{equation*}
We derive a global maximum at
\begin{equation*}
x^\star = \frac{\mu^-\sigma_+^2 - \mu^+\sigma_-^2}{\sigma_+^2 - \sigma_-^2}. 
\end{equation*}
Then
\begin{equation*}
a^\star = \frac{\sigma_+}{\sigma_-} \exp\left\lbrace
\frac{(\mu^+ - \mu^-)^2}{2(\sigma_+^2 -\sigma_-^2)}
\right\rbrace.
\end{equation*}

\paragraph*{Monotonicity of a two-component Normal signed mixture}

Assume $f \equiv \mathcal{N}(0, 1)$ and $g \equiv \mathcal{N}(\mu, \sigma^2)$, with $\mu \geq 0$ and $\sigma < 1$. 
    The signed mixture $m$ has at most 3 extreme values. More specifically, it admits
    \begin{enumerate}[label = (\roman*)]
        \item a unique global maximum in $(-\infty, 0]$, if 
            \begin{equation*}
                a \geq \sup_{x > \mu} \frac{(x-\mu)g(x)}{\sigma^2 xf(x)}\,;
            \end{equation*}
        \item a local maximum in $(-\infty, 0]$, a local minimum and a local maximum in $[\mu, +\infty)$, otherwise.
    \end{enumerate}

We have for all $x\in\mathbb{R}$
\begin{equation*}
m'(x) = \frac{f(x)}{a-1} \left\{\psi(x) -ax\right\},
\quad\text{where}\quad
\psi:x\mapsto \frac{(x-\mu)g(x)}{\sigma^2f(x)}.
\end{equation*}
The number of solutions to $m'(x) = 0$ then depends on the number of intersection points between $\psi$ and $x\mapsto ax$. The assumption on two-component signed mixtures imposes $g(x)/f(x)\limite{x}{\pm \infty} 0$. Since it happens at exponential speed, we also have $\psi(x)\limite{x}{\pm \infty} 0$. On the other hand, for all $x\in\mathbb{R}$,
\begin{equation*}
    \psi'(x) = \left\{(\sigma^2 - 1)x^2 + x(2\mu - \mu\sigma^2) + \sigma^2 - \mu^2\right\}\frac{g(x)}{\sigma^4f(x)}.
\end{equation*}
A straightforward computation shows that the equation $\psi'(x) = 0$ has two distinct solutions and thus $\psi$ has a global minimum and a global maximum, respectively at
\begin{equation*}
    x_1 = \frac{\mu}{2} +\frac{\mu - \sigma\sqrt{\mu^2\sigma^2+4-4\sigma^2}}{2(1-\sigma^2)}
    \quad\text{and}\quad
    x_2 = \frac{\mu}{2} +\frac{\mu + \sigma\sqrt{\mu^2\sigma^2+4-4\sigma^2}}{2(1-\sigma^2)}.
\end{equation*}
Moreover, since
\begin{equation*}
\psi''(x) = \frac{g(x)}{\sigma^6f(x)} \left\lbrace
(\sigma^2-1)^2 x^3 + Q_2(x)
\right\rbrace
\end{equation*}
where $Q_2(x)$ is a univariate polynomial of degree 2, $\psi$ changes convexity solely one time in $[x_1, x_2]$. Note that $\psi''(\mu) = 2\mu g(\mu)/\{\sigma^2 f(\mu)\} \geq 0$ and thus the change of convexity happens between $\mu$ and $x_2$. Functions $\psi$ and $x\mapsto ax$ have then at most 3 intersection points. 

\vspace{\baselineskip}
\noindent\textbf{If $\boldsymbol\mu \mathbf{= 0}$,}\hspace{.5em} we have a first obvious solution: $x = 0$. Since $\psi$ is an odd function when $\mu = 0$, the latter solution is unique if
\begin{equation*}
    a \geq \psi'(0) = \frac{1}{\sigma^3} = \sup_{x > 0} \frac{g(x)}{\sigma^2 f(x)}.
\end{equation*}
It is the unique global maximum for $m$, which thus has the same monotonicity as $f$. Otherwise, it is a local minimum and we have two local maxima corresponding to the intersection points solution of
\begin{equation*}
    \exp\left\lbrace\frac{x^2}{2\sigma^2}(\sigma^2-1)\right\rbrace = a\sigma^3,
\end{equation*}
that is $\pm 2\sigma^2\log(a\sigma^3)/(1 - \sigma^2)$.

\vspace{\baselineskip}
\noindent\textbf{If $\boldsymbol\mu \mathbf{> 0}$,}\hspace{.5em} we do not have a closed form for the critical points. However $\psi$ is a non-positive function on $(-\infty, \mu]$, that is decreasing on $(-\infty, x_1)$ and an increasing convex function on $(x_1, \mu]$. Consequently, there exists a unique intersection point $y^\star_1$ on $(-\infty, 0]$ that corresponds to a local maximum of $m$. The function $x\mapsto ax$ being positive on $(0, \mu]$, if there are two other intersection points, they are necessarily in $(\mu, +\infty)$. If 
\begin{equation*}
    a \geq \sup_{x > \mu} \frac{(x-\mu)g(x)}{\sigma^2 xf(x)},
\end{equation*}
then for all $x > \mu$, $m'(x) < 0$ and as a result $y^\star_1$ is the unique global maximum of $m$. Otherwise, we have two intersection points. The point $y^\star_2$ corresponding to a local minimum of $m$ is bound to be on $(\mu, x_2)$. Nevertheless, note that on $(\mu, x_2)$
\begin{equation*}
\psi(x) -ax \leq (x - \mu)\frac{a^\star}{\sigma^2} -ax = \frac{(a^\star - a\sigma^2)x - a^\star\mu}{\sigma^2}.
\end{equation*}
If $a^\star - a\sigma^2 > 0$, $m$ is decreasing between $\mu$ and $a^\star\mu/(a^\star - a\sigma^2)$ and $y^\star_2 \geq a^\star\mu/(a^\star - a\sigma^2)$.

\smallskip\begin{remark}
    The results for $\mu < 0$ are obtained by symmetry of the problem. Finally the result for the general case of a signed mixture $m$ of $\mathcal{N}(\mu^+, \sigma_+^2)$ and $\mathcal{N}(\mu^-, \sigma_-^2)$ can be derived using that for all $x\in\mathbb{R}$
\begin{equation*}
m(x) \propto a f\left(\frac{x - \mu^+}{\sigma^+}\right) - g\left(\frac{x - \mu^+}{\sigma^+}\right),
\quad\text{with}\quad
\mu = \frac{\mu^- - \mu^+}{\sigma_+} \quad\text{and}\quad \sigma = \frac{\sigma_-}{\sigma_+}.
\end{equation*}
\end{remark}

\subsubsection{Example of Gamma distributions} 
\label{app:gamma}
Let $f \equiv \Gamma(\alpha^+, \beta^+)$, $g \equiv \Gamma(\alpha^-, \beta^-)$ (shape, rate parametrization). 
Condition \eqref{eqn:a-exp-fam} imposes $\alpha^+ \leq \alpha^-$ and $\beta^+ < \beta^-$, so that
\begin{align*}
(\alpha^- - \alpha^+)\log x + (\beta^+ - \beta^-) x & \limite{x}{0^+} -\infty
\quad\text{or}\quad 0,
\\
(\alpha^- - \alpha^+)\log x + (\beta^+ - \beta^-) x & \limite{x}{+\infty} -\infty.
\end{align*}
Assuming this, critical points are solutions of 
\begin{equation*}
\frac{\alpha^- - \alpha^+}{x} + \beta^+ - \beta^- = 0,
\end{equation*}
which leads to a unique global maximum at
\begin{equation*}
x^\star = \frac{\alpha^+ - \alpha^-}{\beta^+ - \beta^-}.
\end{equation*}
Then
\begin{equation*}
a^\star =
\begin{cases}
\displaystyle\frac{\Gamma(\alpha^+)(\beta^-)^{\alpha^-}}{\Gamma(\alpha^-)(\beta^+)^{\alpha^+}}\exp\left\lbrace (\alpha^+ - \alpha^-) (1 - \log x^\star)\right\rbrace
& \quad\text{if $\alpha^+ < \alpha^-$,}
\\
\vspace{.25\baselineskip}
\\
\displaystyle\left(\frac{\beta^-}{\beta^+}\right)^{\alpha^+}
& \quad\text{if $\alpha^+ = \alpha^-$.}
\end{cases}
\end{equation*}

\paragraph*{Monotonicity of a two-component Gamma signed mixture}

The arguments for studying the monotonicity are similar to those used for the Gaussian case. For all $x > 0$,
\begin{equation*}
m'(x) = \frac{f(x)}{(a-1)x} \left\{\psi(x) - a(\beta^+ x + 1 - \alpha^+)\right\},
\quad\text{where}\quad
\psi:x\mapsto \frac{(\beta^- x - \alpha^- + 1)g(x)}{f(x)}.
\end{equation*}

\noindent\textbf{If $\boldsymbol{\alpha^+ = \alpha^-}$,}\hspace{.5em} first and second derivatives of $\psi$ write as
\begin{align*}
\psi'(x) & = \left\{
\beta^-(\beta^+ - \beta^-)x + \beta^+ + \alpha^-(\beta^- -\beta^+)
\right\}
\frac{g(x)}{f(x)} \\
\psi''(x) & = \left[
\beta^-(\beta^- - \beta^+)x - \{\beta^+ +\beta^- + \alpha^-(\beta^- -\beta^+)\}
\right]\frac{(\beta^- -\beta^+)g(x)}{f(x)}.
\end{align*}
We thus have a unique global maximum and a single change of convexity. 
\begin{itemize}[leftmargin=*]
    \item If $\alpha^- = \alpha^+ > 1$, then $a(1 - \alpha^+) < \psi(0)$ and $m$ admits a unique global maximum on $(0, + \infty)$.
    \item If $\alpha^- = \alpha^+ \leq 1$, $a(1 - \alpha^+) \geq \psi(0)$ and $m$ admits a local minimum and local maximum on $(0, + \infty)$ solely when
    \begin{equation*}
        a < \sup_{x \geq 0} \frac{\beta^-(\beta^- x + 1 - \alpha^-)}{\beta^+(\beta^+ x + 1 - \alpha^-)} \exp\left\{(\beta^+ - \beta^-)x\right\}.
    \end{equation*}
    Otherwise, $m$ is decreasing on $(0, + \infty)$.
\end{itemize}

\vspace{\baselineskip}
\noindent\textbf{If $\boldsymbol{\alpha^+ < \alpha^-}$,}\hspace{.5em} 
\begin{align*}
\psi'(x) = \Big[
\beta^-(\beta^+ - \beta^-)x^2 & + \{\beta^+(1-\alpha^-) + \beta^-(2\alpha^- - \alpha^+)\}x
\\ &  + (\alpha^- -\alpha^+)(1-\alpha^-)
\Big]\frac{g(x)}{xf(x)}
\end{align*}
The univariate polynomial has necessarily two real roots $x_1$ and $x_2$ (otherwise $\psi$ would be a continuous decreasing function on $[0, +\infty)$ and hence constant since its limit at 0 and $+\infty$ is 0). It is straightforward to show that the smallest root is non-positive when $\alpha^- \leq 1$ and non-negative when $\alpha^- > 1$ while the largest is always positive. The convex properties are identical to the Gaussian example as 
\begin{equation*}
    \psi''(x) = \frac{g(x)}{x^2f(x)}\left\{
    \beta^-(\beta^- - \beta^+)^2 x^3 + Q_2(x)
\right\},
\end{equation*}
where $Q_2(x)$ is a univariate polynomial of degree 2.
\begin{itemize}[leftmargin=*]
    \item If $\alpha^+ < 1$, then $a(1 - \alpha^+) > 0$. $m$ admits a local minimum and local maximum on $(\max\{0, (\alpha^- - 1)/\beta^-\}, + \infty)$ solely when
    \begin{equation*}
        a < \sup_{x \geq 0} \frac{\{\beta^- x + 1 - \alpha^-\}g(x)}{\{\beta^+ x + 1 - \alpha^+\}f(x)}.
    \end{equation*}
    Otherwise, $m$ is decreasing on $(0, + \infty)$.
    \item If $\alpha^+ \geq 1$,  then $a(1 - \alpha^+) \leq 0$. The behaviour depends on the relative position of the modes of each component.
    \begin{itemize}[label = \textbullet]
        \item If $\beta^-(\alpha^+ -1) < \beta^+(\alpha^- -1)$, then $m$ admits a local maximum in $[0, (\alpha^+ -1)/\beta^+]$. It is then
        decreasing on $[(\alpha^+ -1)/\beta^+, + \infty)$ when
        \begin{equation*}
            a \geq \sup_{\beta^- x + 1 - \alpha^-> 0} \frac{\{\beta^- x + 1 - \alpha^-\}g(x)}{\{\beta^+ x + 1 - \alpha^+\}f(x)}.
        \end{equation*}
        Otherwise, $m$ admits a local minimum and a local maximum within the latter interval.
        \item If $\beta^-(\alpha^+ -1) > \beta^+(\alpha^- -1)$, then $m$ admits a local maximum in $[(\alpha^+ -1)/\beta^+, +\infty)$. On $[0, (\alpha^+ -1)/\beta^+]$, $m$ is increasing when
        \begin{equation*}
            a \geq \sup_{\beta^- x + 1 - \alpha^- < 0} \frac{\{\beta^- x + 1 - \alpha^-\}g(x)}{\{\beta^+ x + 1 - \alpha^+\}f(x)}.
        \end{equation*}
        Otherwise, $m$ admits a local maximum and a local minimum within the latter interval.
        \item If $\beta^-(\alpha^+ -1) = \beta^+(\alpha^- -1)$, both components have the same mode that is the unique global maximum of $m$ when
        \begin{equation*}
            a \geq \sup_{\beta^+ x + 1 - \alpha^+ \neq 0} \frac{\{\beta^- x + 1 - \alpha^-\}g(x)}{\{\beta^+ x + 1 - \alpha^+\}f(x)} = \psi'\left(\frac{\alpha^+ - 1}{\beta^+}\right),
        \end{equation*}
        and a local minimum otherwise. In the latter situation, $m$ admits two local maxima, one in $(0, (\alpha^+ -1)/\beta^+)$ and one in $((\alpha^+ -1)/\beta^+, +\infty)$.
    \end{itemize}
\end{itemize}

\subsubsection{Construction of the partition}
\label{app:hist}


We compute $D_0 = (q_\alpha, q_{1 - \alpha})^c$, where $q_\alpha$ and $q_{1 - \alpha}$ are respectively  $\alpha$ and $1-\alpha$-quantiles of $g$, with
\begin{equation*}
\alpha = \frac{(a - 1)\{1 - \delta(\varepsilon + 1)\}}{2\delta}.
\end{equation*}
In the specific setting of a two-component Gamma signed mixture with both shape parameters larger than 1, we consider $D_0 = [q_{1 - 2\alpha}, + \infty)$.

We partition $D_0^c$ into $S$ subsets $D_1, \ldots, D_S$ relying on the monotonic properties of the signed mixture. The aim is to decide whether, on subdivisions $[x_i, x_{i+1}[$ of $D_i$,  we use 
    \begin{equation*}
    (\mathrm{A})\quad h_i = \sup_{[x_i, x_{i+1}[} (af - g)(x)
    \quad\text{or}\quad
    (\mathrm{B})\quad h_i = \sup_{[x_i, x_{i+1}[} af(x) - \inf_{[x_i, x_{i+1}[} g(x).
    \end{equation*}
On each subset, the signed mixture has one of the following properties:
\begin{enumerate}
    \item the signed mixture is a monotonic function. On such a subset, we use the version (A) on every subdivision $[x_i, x_{i+1}[$;
    \item the signed mixture changes monotonicity only once on the subset. For all subdivisions $[x_i, x_{i+1}[$ such that $m'(x_i)m'(x_{i+1}) > 0$, we use the version (A). Otherwise, we use the version (B) but that happens solely once;
    \item the signed mixture changes monotonicity more than once on the interval. On such subset, we use the version (B) on every subdivision $[x_i, x_{i+1}[$.
\end{enumerate}
Note that for two-component Gamma and Normal signed mixtures, we can restrict ourselves to use only the first two types of subsets by numerically computing some of the local extrema.

For a given subset $D_s$, $1\leq s \leq S$, we start with the partition $[x_1, x_{2}[, \ldots, [x_{n-1}, x_{n}[$, such that $x_{i+1} - x_i = \vert D_0 \vert /100$, $1\leq i \leq n$. The length of each partition element of $D_s$ is divided by two until we achieve
\begin{equation*}
    \sum_{x_i \in D_s} (x_{i + 1} - x_i)h_i = m(D_i) + \frac{\varepsilon}{S + 1}.
\end{equation*}

\section{Pairing mechanism}
\label{app:pairing}

\subsection{Proof of Lemma \ref{lem:opt-pair}}
\label{sec:proof:lem:opt-pair}

\begin{proof}
As mentioned in the paper, to get one sample from $m$, we need to propose in average $C$ samples from $\pi$. Let now detail the number $N$ of proposed sample from $\pi$ according to the sampling strategy of Lemma \ref{lem:opt-pair}.
The probability to randomly a pick pair $(i,j)\in F$ is $(\omega_{ij}^+ - \omega_{ij}^-)/C$, while the one to pick a residual $i\in\{1, \ldots, P\}$ is $r_i/C$.
To get one sample from a pair $(i,j)\in F$, the vanilla scheme requires proposing $\omega_{ij}^+/(\omega_{ij}^+ - \omega_{ij}^-)$ random variables, while, by assumption, the piecewise sampling scheme requires less than $1/\delta$ random variables. When sampling from a residual, we have an exact and immediate sampler that requires solely to propose one draw. Overall, we then have
\begin{equation*}
N \leq \sum_{(i,j)\in F} \frac{\omega_{ij}^+}{C}
\mathds{1}_{\{(1 - \delta)\omega_{ij}^+ - \omega_{ij}^- \geq 0\}}
+ \sum_{(i,j)\in F} \frac{\omega_{ij}^+ - \omega_{ij}^-}{\delta C}
\mathds{1}_{\{(1 - \delta)\omega_{ij}^+ - \omega_{ij}^- < 0\}}
+ \sum_{i = 1}^P \frac{r_i}{C}.
\end{equation*}
Since 
\begin{equation*}
\sum_{(i,j) \in F} \omega_{ij}^+ + \sum_{i = 1}^P r_i = \sum_{i = 1}^P \omega_i^+,
\end{equation*}
we end up with
\begin{align*}
C \times N 
& \leq
\sum_{i = 1}^P \omega_i^+
- \sum_{(i,j)\in F} \omega_{ij}^+
\mathds{1}_{\{(1 - \delta)\omega_{ij}^+ - \omega_{ij}^- < 0\}}
+ \sum_{(i,j)\in F} \frac{\omega_{ij}^+ - \omega_{ij}^-}{\delta}
\mathds{1}_{\{(1 - \delta)\omega_{ij}^+ - \omega_{ij}^- < 0\}}
\\
& \leq
\sum_{i = 1}^P \omega_i^+ 
+ \frac{1}{\delta} \sum_{(i,j)\in F} \left\lbrace (1 - \delta)\omega_{ij}^+ - \omega_{ij}^- \right\rbrace \mathds{1}_{\{(1 - \delta)\omega_{ij}^+ - \omega_{ij}^- < 0\}}.
\end{align*}
\end{proof}

\subsection{Objective function for the simplex method}
\label{sec:app:simplex}

Minimizing the objective function \eqref{eqn:obj-simplex} provides the optimal pairing for Lemma \ref{lem:opt-pair}. Let $\{\tilde\omega_{ij}^+, \tilde\omega_{ij}^-\}_{(i,j)\in E}$ be a minimizer of \eqref{eqn:obj-simplex}, which, as a reminder, is given by
\begin{equation*}
\sum_{(i,j)\in E} \left\lbrace (1 - \delta)\omega_{ij}^+ - \omega_{ij}^- \right\rbrace,
\end{equation*}
and assume there exists a pair $(k, \ell)\in E$ such that $(1 - \delta)\tilde\omega_{k\ell}^+ - \tilde\omega_{k\ell}^- > 0$. Then,
\begin{align*}
\sum_{(i,j)\in E} \left\lbrace (1 - \delta)\tilde\omega_{ij}^+ - \tilde\omega_{ij}^- \right\rbrace
& > \sum_{(i,j)\in E\setminus\{(k, \ell)\}} \left\lbrace (1 - \delta)\tilde\omega_{ij}^+ - \tilde\omega_{ij}^- \right\rbrace
\\
& \hspace{2em}+ \left\lbrace (1 - \delta)\omega_{k\ell}^+ - \omega_{k\ell}^- \right\rbrace\mathds{1}_{\{(\omega_{k\ell}^+, \omega_{k\ell}^-) = (0,0)\}}.
\end{align*}
This contradicts the fact that $\{\tilde\omega_{ij}^+, \tilde\omega_{ij}^-\}_{(i,j)\in E}$ is a minimizer of \eqref{eqn:obj-simplex}. 
Therefore a minimizer $\{\tilde\omega_{ij}^+, \tilde\omega_{ij}^-\}_{(i,j)\in E}$ of \eqref{eqn:obj-simplex} satisfies for all $(i,j)\in E$, $(1 - \delta)\tilde\omega_{ij}^+ - \tilde\omega_{ij}^- \leq 0$.
Consequently, \eqref{eqn:obj-simplex} has the same set of minimizers than
\begin{equation*}
\sum_{(i,j)\in E} \left\lbrace (1 - \delta)\omega_{ij}^+ - \omega_{ij}^- \right\rbrace 
\mathds{1}_{\left\lbrace (1 - \delta)\omega_{ij}^+ - \omega_{ij}^-  < 0 \right\rbrace}.
\end{equation*}

\section{Numerical inversion of the cdf}
\label{app:inv-cdf}

Consider $n$ ordered points $q_1, \ldots, q_n$ in the support of $f$ and $p_1, \ldots, p_n$ the value of the cdf associated with $m$ at these points, that is $p_i = m((-\infty, q_i])$, $1\leq i\leq n$. Furthermore, set a user-specified precision $\varepsilon$. In the paper, we used $\varepsilon = 10^{-10}$.

The set of points $q_1, \ldots, q_n$ and $p_1, \ldots, p_n$ provides a piecewise affine approximation of the inverse cdf. The aim of our numerical inversion is to refine the affine approximation so that we can find the quantile associated with a probability arbitrary close to a point $u\in[0, 1]$ using one of the following steps.

\paragraph*{Step A} Assume we have $u\in[p_i, p_{i +1}]$. We compute the preimage $q^\star$ of $u$ by the affine transformation on $[p_i, p_{i +1}]$
\begin{equation*}
    x \mapsto \frac{p_{i+1} - p_i}{q_{i+1} - q_i} x + \frac{p_{i}q_{i+1} - p_{i+1}q_i}{q_{i+1} - q_i},
\end{equation*}
that is
\begin{equation*}
    q^\star = \frac{(q_{i+1} - q_i)u - p_{i}q_{i+1} + p_{i+1}q_i}{p_{i+1} - p_i}.
\end{equation*}
Then we compute the cdf at $q^\star$ and denote $p^\star$ its value. This yields a new interval containing $u$ that is strictly included in $[p_i, p_{i +1}]$. We now apply the same procedure on that interval. We repeat the process until we get a value $p^\star$ such that $\vert u - p^\star\vert < \varepsilon$.

\paragraph*{Step B} If we deal with a distribution that has an unbounded support, tails should be treated separately. Assume we have $u < p_1$. We use a scheme similar to the above except we take the preimage by the affine transformation based on the two first points $(p_1, p_2)$ larger than $u$. Here we stop when we find a point $p^\star \leq u + \varepsilon$. If $u > p_n$, the reasoning is the same except we use the last two points smaller than $u$ and we stop when $p^\star \geq u - \varepsilon$. Now, either this ending point satisfies $\vert u - p^\star\vert < \varepsilon$, or we apply Step A starting with the interval $[p^\star, p_{1}]$ for left tail or $[p_n, p^\star]$ for right tail.

\section{Random generator of signed mixture models}
\label{app:rand-mix}

We used two different methods to generate the benchmark models. Both methods start by randomly setting an initial number $K$ of positive weight components in the model. The number $K$ is drawn uniformly between $k_{\min}$ and $k_{\max}$ for the following sets $\{5, \ldots, 10\}$, $\{10, \ldots, 30\}$, $\{30, \ldots, 50\}$, and $\{50, \ldots, 100\}$. Once the number of positive weight components is set, we randomly draw the associated parameter values.
\begin{itemize}
    \item For Normal signed mixtures, the mean $\mu^+$ is drawn uniformly in $[0, 20]$ and the standard deviation is drawn according to a Gamma distribution $\Gamma(3, 2.5)$ (shape, rate parametrization).
    \item For Gamma signed mixtures, the shape parameter $\alpha^+$ is drawn according to a Gamma distribution $\Gamma(4, 0.5)$ and the rate parameter $\beta^+$ to a Gamma distribution $\Gamma)(2, 0.7)$.
\end{itemize}
We then randomly set the number of negative weight components (1 or 2) that are initially related to each positive weight component. The parameter value for the negative components as well as the weights are then computed to ensure a benchmark model for which the vanilla average acceptance probability $p$ ranges in $[p_{\min}, p_{\max}]$ for the following sets $[0, 10^{-4}]$, $(10^{-4}, 0.001]$, $(0.001, 0.01]$, $(0.01, 0.05]$, $(0.05, 0.1]$, $(0.1, 0.2]$ and $(0.2, 0.3]$. For each set of values for $K$ and $p$, and for each method, we generated 50 benchmarks.

\subsection{First method}

The first method is based on the properties of two-component signed mixtures. For a given positive weight component, we compute the parameter value of the associated negative weight component such that $a^\star \in [1/(1-p_{\max}), 1/(1-p_{\min})]$. The weights for this two-signed component are the ones associated with $a^\star$. If the positive weight component is associated with more than one negative component, we repeat this procedure for each negative component. As a result, we thus obtain a collection of two-component signed mixtures that all have the targeted acceptance probability. A convex combination with uniform rates of these mixtures yields a signed mixture with the vanilla average acceptance probability we aim at.

If the overall acceptance probability is lower than $p_{\max}$, we randomly decide to add positive weight components that can either balance some of the negative components already included or that can balance none. We can easily determine the maximal weight to assign to such single components so the acceptance probability remains lower than $p_{\max}$. Indeed, assume we add $\tilde{K}$ positive weight components. The new normalized signed mixture writes as
\begin{equation*}
    \frac{
\sum_{i = 1}^K \omega_i^+f_i - \sum_{j = 1}^N \omega_j^- g_j + \sum_{k = 1}^{\tilde{K}} r_k f_{K + k}
}{
1 + \sum_{k = 1}^{\tilde{K}} r_k
}.
\end{equation*}
The latter is associated with a vanilla acceptance probability lower than $p_{\max}$ as long as
\begin{equation*}
\sum_{k = 1}^{\tilde{K}} r_{k} \leq \frac{p_{\max} \sum_{i = 1}^K \omega_i^+ - 1}{1 - p_{\max}}.
\end{equation*}

In a given benchmark, a negative weight component is hence not naturally paired with a single positive weight component. This method aims at providing benchmarks such that the number of acceptable pairs in the model is quite important. They constitute a good basis to challenge the performances of the simplex method as it has to narrow down the pairs involved in a pairing from a large number of initial acceptable pairs.

\subsection{Second method}
After generating all the positive weight components, accounting for multiplicity when more than one negative weight component is associated, for a given positive weight component, we randomly draw the parameter value of the associated negative weight component such that $a^\star \leq 10$. 
This constraint ensures the two-component signed mixture does not have a vanilla acceptance probability larger than 0.90 in the worst case scenario, making the next step easier. As opposed to the previous method, we now consider the linear combination $af - g$ with $a$ drawn uniformly in $[0, a^\star]$. The resulting function takes negative values on the support of $f$ and, hence, does not define a distribution anymore.

We use all the positive components $f_i$, $1\leq i \leq K-1$ generated, except $f$, to balance the negative part of that function. First, we make sure that all positive components together have enough mass over the set of negative values, that is the function is not negative in the tails of all the possible positive components. When necessary we add one or more positive weight components (we still denote $K$ the overall number of positive weight components). We then compute the weights $\tilde\omega_i^+$ such that
\begin{equation*}
    af - g + \sum_{i = 1}^{K - 1} \tilde\omega_i^+ f_i \geq 0.
\end{equation*}
That yields a collection of $K$ signed mixtures, each one having solely one negative component and associated with a vanilla acceptance probability $p_i$. We consider a convex combination of these signed mixtures to control the acceptance probability associated with the vanilla method and set it to $\min_i p_i$. This is usually not enough to ensure that $p$ ranges in $[p_{\min}, p_{\max}]$. However, we can easily modify the model to satisfy this constraint by adding a two-signed component mixture to the model. We select at random a positive weight component included in the model and we build from it a two-signed mixture $a^\star f - g$ that fulfills the constraint on $p$. The new normalized signed mixture writes as
\begin{equation*}
    \frac{
\sum_{i = 1}^K \omega_i^+f_i - \sum_{j = 1}^N \omega_j^- g_j + \lambda(a^\star f - g)
}{
1 + \lambda(a^\star -1)
}.
\end{equation*}
The latter is associated with a vanilla acceptance probability lower than $p_{\max}$ as long as
\begin{equation*}
\lambda \leq \frac{p_{\max} \sum_{i = 1}^K \omega_i^+ - 1}{a^\star(1 - p_{\max}) - 1}.
\end{equation*}
This method aims at providing benchmarks that exhibit negative weight residuals. Such benchmarks allow to study the performances of the stratified method when the residual mixture obtained after the pairing step degrades the acceptance probability of the procedure (see Figure \ref{fig:rate-strat}).

\section{Supplementary material on methods comparison}

\begin{figure}[h!]
    \centering
    \includegraphics[width = \textwidth]{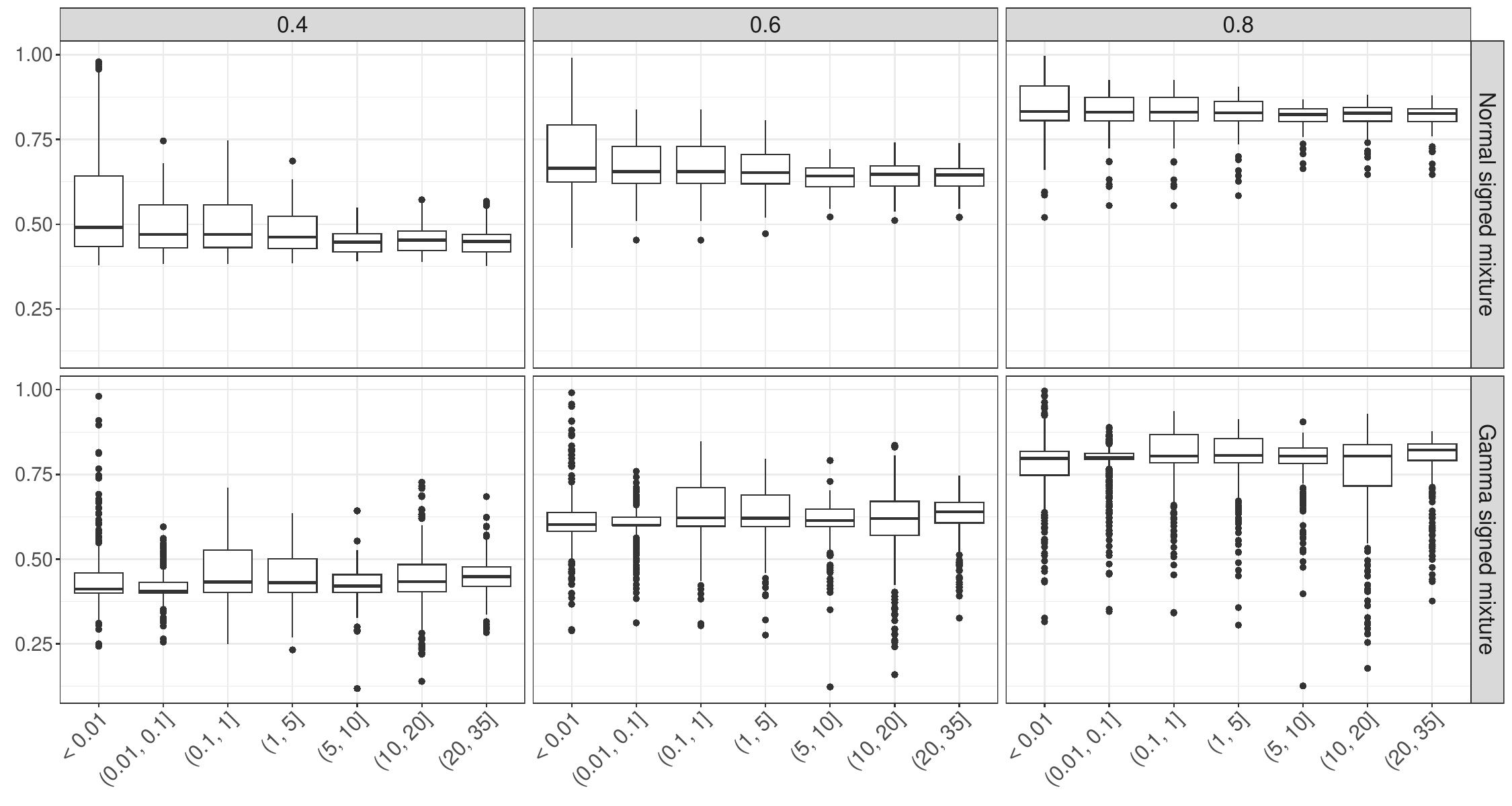}
    \caption{Theoretical acceptance probability of the stratified sampling scheme with respect to the vanilla average acceptance probability categories (x-axis in \%) and user-specified acceptance probability $\delta$ for the 2,800 randomly generated signed mixtures of, respectively, Normal distributions (top row) and Gamma distributions (bottom row). An acceptance probability lower than $\delta$ signals the presence of negative weight residuals.}
    \label{fig:rate-strat}
\end{figure}

\begin{figure}[h!]
    \centering
    \includegraphics[width = \textwidth]{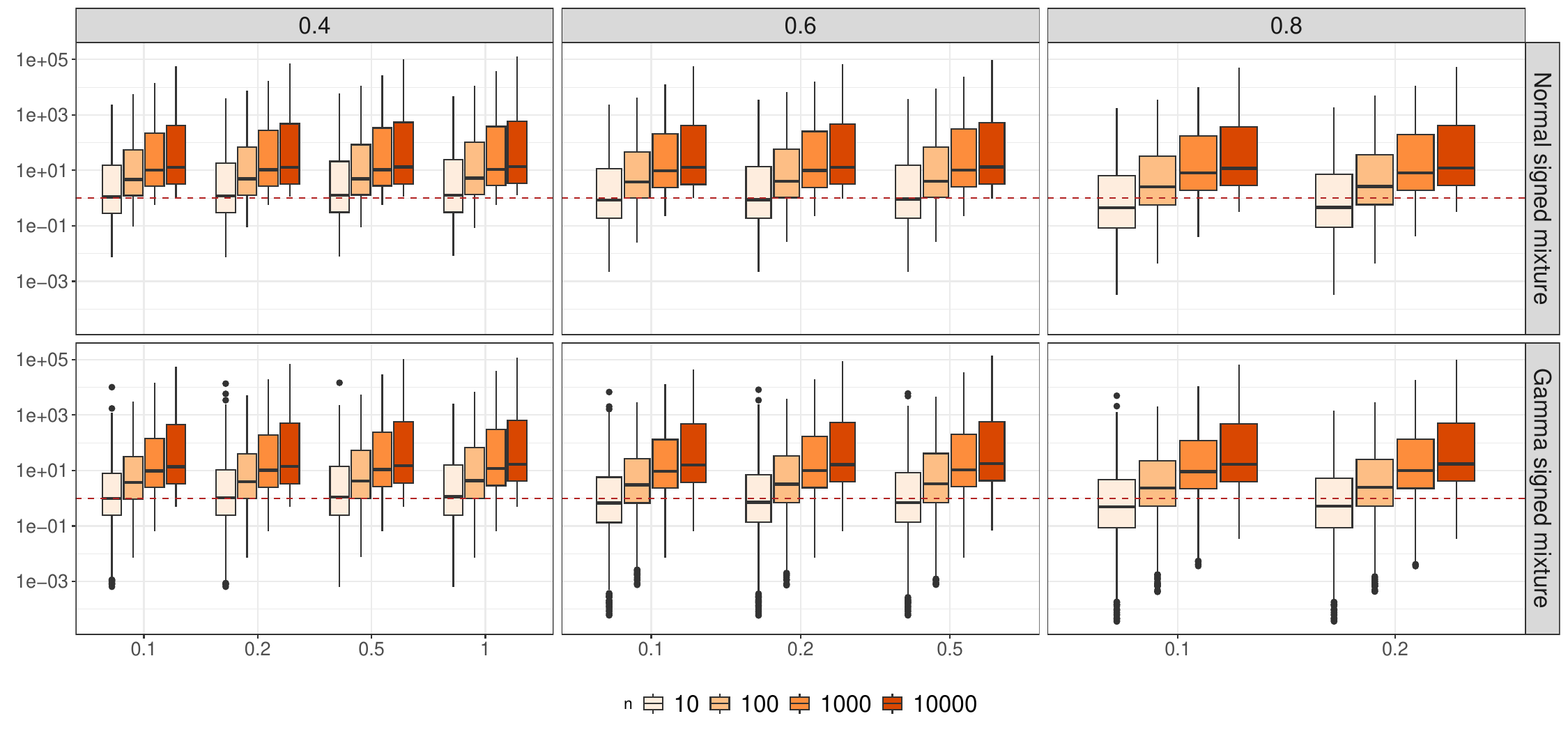}
    \caption{Relative efficiency of the vanilla method compared to the stratified method with respect to user-specified acceptance probability $\delta$, tolerance level $\epsilon$ (x-axis) and number of draws $n$, for the 2,800 randomly generated signed mixtures of, respectively, Normal distributions (top row) and Gamma distributions (bottom row).}
    \label{fig:rel-eff-delta-eps}
\end{figure}

\end{appendices}

\end{document}